\documentclass{article}
\pdfpagewidth=8.5in
\pdfpageheight=11in

\usepackage{ijcai25}

\usepackage{times}
\usepackage{soul}
\usepackage{url}
\usepackage[hidelinks]{hyperref}
\usepackage[utf8]{inputenc}
\usepackage[small]{caption}
\usepackage{graphicx}
\usepackage[T1]{fontenc}

\usepackage{booktabs}
\usepackage{algorithm}
\usepackage{algorithmic}
\usepackage[switch]{lineno}
\usepackage{xcolor}
\usepackage{tcolorbox}
\usepackage{tikz}
\usepackage{enumerate}
\usepackage{todonotes}
\usepackage{amsmath}
\usepackage{amsthm}
\usepackage{amssymb}

\urlstyle{same}

\tcbset{size=small,top=1mm,bottom=1mm,middle=1mm}

\newtheorem{theorem}{Theorem}[section]
\newtheorem{corollary}[theorem]{Corollary}
\newtheorem{proposition}[theorem]{Proposition}

\theoremstyle{definition}
\newtheorem{definition}[theorem]{Definition}
\newtheorem{example}[theorem]{Example}

\DeclareMathOperator*{\argmax}{arg\,max}

\newcommand{\mechanism}{\mathcal{M}}

\newcommand{\minsum}{\textsc{Min-Sum}}
\newcommand{\minmax}{\textsc{Min-Max}}
\newcommand{\minmaxdec}{\textsc{Min-Max-Dec}}

\def\NP{\textsf{NP}}

\pdfinfo{
/TemplateVersion (IJCAI.2024.0)
}

\title{Not in My Backyard! Temporal Voting Over Public Chores}

\author{
Edith Elkind$^1$
\and
Tzeh Yuan Neoh$^{2}$\and
Nicholas Teh$^3$\\
\affiliations
$^1$Northwestern University, USA\\
$^2$Harvard University, USA\\
$^3$University of Oxford, UK\\
\emails
edith.elkind@northwestern.edu,
tzehyuan\_neoh@g.harvard.edu,
nicholas.teh@cs.ox.ac.uk
}

\begin{document}

\maketitle

\begin{abstract}
    We study a temporal voting model where voters have dynamic preferences over a set of public chores---projects that benefit society, but impose individual costs on those affected by their implementation. We investigate the computational complexity of optimizing utilitarian and egalitarian welfare. Our results show that while optimizing the former is computationally straightforward, minimizing the latter is computationally intractable, even in very restricted cases. Nevertheless, we identify several settings where this problem can be solved efficiently, either exactly or by an approximation algorithm. We also examine the effects of enforcing temporal fairness and its impact on social welfare, and analyze the competitive ratio of online algorithms. We then explore the strategic behavior of agents, providing insights into potential malfeasance in such decision-making environments. Finally, we discuss a range of fairness measures and their suitability for our setting.
\end{abstract}

\section{Introduction}
The local government is launching a multi-year initiative to enhance community programs and boost tourism, aiming to create a vibrant and sustainable future for the district. At the beginning of every year, the government will unveil a calendar of planned initiatives for the year ahead, featuring events for each month. These initiatives include popular recurring activities such as food and music festivals, farmers' markets, fireworks displays, and sports tournaments.
As part of this effort, residents are invited to participate in a voting process to voice their preferences on which initiatives should take place in the following year.
However, residents living near proposed event locations often raise concerns about potential negative impacts on their quality of life. These concerns include crowding, noise pollution, traffic congestion, or disruptions to the community’s character. Consequently, some members of the community may oppose specific events, even those deemed beneficial for the district as a whole.

This phenomenon is often referred to as \emph{``Not in My Backyard'' (NIMBY)}, and usually comes in the form of an organized effort by local residents or community groups to oppose certain developments or activities in their neighborhood, often due to perceived adverse effects on their local environment or daily lives. 
NIMBY groups often present a significant challenge to public policy design: while they may be driven by legitimate concerns, they can delay or obstruct important community goals. 
This tension between local concerns and broader societal needs makes such movements a complex and often polarizing issue to address.

Thus, the question that we focus on in this work is: how can we effectively handle such preferences to derive an outcome that is \emph{good} for everyone and/or treats everyone \emph{fairly}?

We approach this problem by viewing it through the lens of \emph{temporal voting} \cite{chandak23,elkind2024temporal,elkind2024verifying,lackner2020perpetual,phillips2025strengthening}. However, instead of voters having approval preferences (indicating which candidates they would like to support) over a set of candidates, voters now express disapproval preferences (indicating which candidates they object to) over a set of candidates.

Disapproval preferences are also relevant in settings where there are too many candidates to consider, and agents only have strong opinions about candidates they do not want chosen (which could be due to proximity concerns as mentioned earlier), but are indifferent among the rest (because this choice does not affect them).
However, results for existing models on temporal voting with ``positively-valued'' candidates (sometimes also known as \emph{public goods} or {issues} \cite{conitzer2017fairpublic,fain2018publicgoods,skowron2022proppublic}) do not automatically transfer to this setting of ``negatively-valued'' candidates (which we call \emph{public chores}\footnote{We feel that the term \emph{public chores} reflects the property of these candidates that while they lead to positive outcomes for the community at large, they often require sacrifices that may not be immediately appreciated by some.}); we discuss this briefly at the end of the next subsection.
This is also the case in the standard \emph{fair allocation of indivisible items} model, whereby the fairness concepts or corresponding results for goods and chores could be vastly different.
In these models, items (goods or chores) are considered to be \emph{privately} allocated to agents, and are not shared among agents.
Considering a model with agents that have disutilities over candidates is also relatively new to temporal voting (apart from \emph{costs} associated with implementing projects in participatory budgeting, as opposed to disutilities here).

\subsection{Our Contributions}
We consider a novel variant of the temporal voting model, where agents express disapproval over candidates.

In Section~\ref{sec:welfare}, we investigate the computational complexity of two welfare objectives: minimizing the sum of agents' disutilities (\minsum{}) and minimizing the maximum disutility (\minmax{}).
We show that, while finding a \minsum{} outcome is easy, the  decision problem associated with \minmax{} is NP-complete even in simple settings.
Nevertheless, we provide several positive parameterized complexity results for the \minmax{} objective.
We also provide an approximation ratio for the corresponding optimization problem.

In Section~\ref{sec:potf}, we analyze the price of enforcing temporal constraints with respect to both objectives; for both cases, we derive tight bounds. 

In Section~\ref{sec:sp}, we shift our focus to the study of strategic manipulation with respect to both welfare objectives. 
We show that while \minsum{} is compatible with \emph{strategyproofness} (i.e., no agent can strictly benefit from misreporting their preferences), this is no longer true for \minmax{}.
Additionally, we show that while \minsum{} is incompatible with a stronger \emph{group-strategyproofness} property, finding a `suitable manipulation' is computationally hard.
We present similar results for \minmax{}, but with respect to strategyproofness.

In Section~\ref{sec:online}, we consider the online setting where information on future projects are unknown, and analyze the competitive ratio of online algorithms.

Finally, in Section~\ref{sec:other}, we discuss other popular fairness notions such as proportionality and equitability, and show that they are either not well-defined in the public chores setting, or come at a very high cost to welfare.

We wish to emphasize that many of our results involve multiplicative upper and lower bounds, addressing challenges such as computational intractability (Section~\ref{sec:welfare}), the need for temporal fairness and its impact on welfare (Section~\ref{sec:potf}), and the lack of information relating to future projects (Section~\ref{sec:online}). 
Importantly, the nature and analysis of these multiplicative bounds differs substantially between the `public goods' setting (analogous to temporal voting) and the `public chores' setting (our focus), and they are not directly transferable.

To illustrate, consider an example with $100$ timesteps where the optimal outcome (with respect to the \minmax{} objective) satisfies all agents in $95$ timesteps. In the public goods' setting, an outcome that satisfies all agents in $50$ timesteps would be considered a $\frac{1}{2}$-approximation. However, in the public chores' setting, this same outcome only provides a $10$-approximation.
This distinction in bounds has significant implications for how agents perceive and engage in collective decision-making. Revisiting our motivating example of NIMBY movements, where agents derive limited personal benefit from approved projects, but experience considerable disutility from disapproved ones, it is more accurate to interpret the outcome above as agents being $10$ times more unhappy, rather than $\frac{1}{2}$ as happy.

\subsection{Related Work}
\paragraph{Sequential Decision-Making}
Works in this area mostly consider positively-valued candidates.
Elkind \emph{et al.}~\shortcite{elkind2024temporalelections} is the most relevant one, which looks into the computational questions associated with welfare maximization, and its compatibility with strategyproofness and proportionality (a popular fairness measure) in the setting with positively-valued candidates.
They also look extensively at the welfare cost of mandating proportionality.
Our work also looks at analogous welfare objectives, and we also consider their compatibility with strategyproofness.
However, on top of the distinctions raised earlier, several other key differences are (i) we consider a more general version of the decision problem for \minmax{}, which imposes constraints involving the sequential nature of timesteps; (ii) for manipulability, we consider computational problems in manipulation and also a stronger version of strategyproofness; (iii) we analyze the online setting; (iv) we discuss why proportionality is not well-defined in our setting and consider equitability as an additional fairness measure.

Another relevant line of work in this area is that of \emph{perpetual voting} \cite{lackner2020perpetual,lackner2023proportionalPV}, which focuses on temporal extensions of traditional multiwinner voting rules.
Bulteau \emph{et al.}~\shortcite{bulteau2021jrperpetual}, Chandak \emph{et al.}~\shortcite{chandak23}, Elkind \emph{et al.}~\shortcite{elkind2024verifying}, and Phillips \emph{et al.}~\shortcite{phillips2025strengthening} build on the temporal voting framework and consider temporal extensions of popular proportional representation (or group fairness) axioms. 
Kahana and Hazon~\shortcite{kahana2023seqcollectiveleximin} study several popular allocation rules (round-robin, maximum Nash welfare, and leximin) with respect to approximately proportional fairness axioms.
Kozachinskiy \emph{et al.}~\shortcite{kozachinskiy2025PVdissatisfaction} study a similar model, but look into the conditions under which there is a sublinear growth of dissatisfaction. 
However, we note that their definition of `dissatisfaction' is inherently different from our `disutility' and studying disapprovals more generally: they consider a model where agents still express approval preferences, and the dissatisfaction of an agent is number of rounds where the project chosen was not approved by the
agent.

Bredereck \emph{et al.}~\shortcite{bredereck2020successivecommittee,bredereck2022committeechange} and 
Zech \emph{et al.}~\shortcite{zech2024multiwinnerchange}
look at sequential committee elections, whereby 
an entire committee (set of candidates) is elected in each round, and impose constraints on the extent a committee can change, whilst ensuring that the candidates retain sufficient support from the electorate. 

We refer the reader to the survey by Elkind \emph{et al.}~\shortcite{elkind2024temporal} for an analysis of other works in this area.

\paragraph{Public Decision-Making} Conitzer \emph{et al.}~\shortcite{conitzer2017fairpublic} study several relaxations of proportionality and its axiomatic guarantees towards individual voters. Fain \emph{et al.}~\shortcite{fain2018publicgoods} considered the notion of an approximate core, whereas Skowron and Górecki~\shortcite{skowron2022proppublic} proposed another variant of proportionality that guarantees fairness to groups of voters. Lackner \emph{et al.}~\shortcite{lackner2023freeriding} studied strategic considerations in the same setting. 
Alouf-Heffetz \emph{et al.}~\shortcite{alouf2022better} consider a model of issue-by-issue voting, which can be viewed as a special case of the temporal voting framework.
Again, all works in this setting look at positively-valued issues/candidates and the focus generally revolves around proportionality, which is ill-defined in the setting with negatively-valued candidates (as we demonstrate towards the end of our paper)

\paragraph{Sequential Fair Division} Another related model is that of sequential fair division \cite{cookson2025temporalfairdivision,choo2025proportionalOFD,elkind2024temporalfairdivision,igarashi2023repeated,neoh2025online}. The key difference is that in this model, a single item (good or chore) is \emph{privately} allocated to an agent at each round (i.e., rivalrous in ownership). This is in contrast to our model, which can be thought of as allocating \emph{public} items (i.e., every agent gets a copy of the item chosen at each round).

\paragraph{Fair Allocation of Indivisible Chores} Chores have also been studied in the field of fair division, but  works in the area typically consider private chores that are rivalrous in ownership \cite{AzizCaIg22chores,bogomolnaia2017competitive,dehghani2018envy,Elkind2024choresbudget} (see also recent survey by Aziz \emph{et al.}~\shortcite{aziz2022survey}). The study of private and public goods is known to be vastly different, and this distinction can also be observed in the context of chores. In fact, even in the case of binary chores (i.e., agents having disutility in $\{0,1\}$ for each chore), the difference is stark. Under binary valuations (which is a very popular model in several fair allocation settings \cite{HalpernPrPs20,SuksompongTe22}), the problem of allocating private chores is known to be easy: popular (approximate) notions of fairness such as EF1 can be trivially obtained. However, when considering public chores, the problem becomes non-trivial even under disapproval preferences.
\nocite{neoh2025strategicchores}

\section{Preliminaries}
For each positive integer $z$, let $[z] := \{1,\dots,z\}$.
Let $N = [n]$ be the set of $n$ \emph{agents}, let $P = \{p_1,\dots,p_m\}$ be the set of $m$ \emph{projects} (or \emph{candidates}), and let $T = [\ell]$ be the set of $\ell$ \emph{timesteps}.
Symmetrically to the literature on multiwinner/temporal voting with approval preferences \cite{lackner2022abc,elkind2024temporal}, we will  assume that voters have \emph{disapproval preferences}. 
For each $i \in N$ and $k \in T$, let $D_{ik} \subseteq P$ denote the \emph{disapproval set} of agent $i$ at timestep $k$, and
let the \emph{disapproval vector} of an agent $i$ be $\mathbf{D}_i = (D_{i1}, \dots, D_{i\ell})$.
An {\em instance} of our problem is a tuple $(N, P, T, ({\mathbf D}_i)_{i\in N})$.

An \emph{outcome} is a vector $\mathbf{o} = (o_1,\dots,o_\ell)$, where $o_k\in P$ for each $k \in T$. Let $\Pi({\mathcal I})$ denote the space of all possible outcomes for an instance $\mathcal I$.
For every $k \in T$ and $\mathbf o\in \Pi({\mathcal I})$, the {\em $k$-truncation} of $\mathbf o$ is the vector $\mathbf{o}^{(k)} = (o_1,\dots,o_k)$.
 
The \emph{disutility} of an agent $i \in N$ from  an outcome
$\mathbf{o}$ is given by $d_i(\mathbf{o}) = |\{k \in T: o_k \in D_{ik}\}|$. We extend this definition to truncated outcomes by writing $d_i(\mathbf{o}^{(k)}) = |\{t \in [k]: o_{t} \in D_{it}\}|$.
A {\em mechanism} maps an instance ${\mathcal I} = (N, P, T, ({\mathbf D}_i)_{i\in N})$ 
to an outcome in $\Pi({\mathcal I})$.

We assume that the reader is familiar with basic notions of classic complexity theory~\cite{papadimitriou_computational_2007} 
and parameterized complexity \cite{flum_parameterized_2006,niedermeier_invitation_2006}.
All omitted proofs can be found in the appendix.

\section{Social Welfare Optimization} \label{sec:welfare}
Two commonly studied welfare objectives in 
collective decision-making are maximizing the sum of agents' utilities (i.e., \emph{utilitarian} welfare) or maximizing the utility of the least happy agent (i.e., \emph{egalitarian} welfare). In our context, 
these objectives translate to, respectively, minimizing the sum of agents' disutilities $\sum_{i \in N} d_i(\mathbf{o})$ (we will refer to this as the \minsum{} objective) or minimizing the maximum disutility $\max_{i \in N} d_i(\mathbf{o})$ (we will refer to this as the \minmax{} objective)\footnote{Another popular welfare objective is Nash welfare, which maximizes the geometric mean of agents' utilities.
However, it is well-known that Nash welfare is ill-defined for negative utilities.}.
Accordingly, given 
an instance $\mathcal{I}$, we refer to outcomes $\mathbf{o} \in \Pi(\mathcal{I})$ that minimize $\sum_{i \in N} d_i(\mathbf{o})$ (resp., $\max_{i \in N} d_i(\mathbf{o})$) as \minsum{} (resp., \minmax{}) outcomes.

We will now discuss the complexity of finding \minsum{} and \minmax{} outcomes, starting with the former. 

\subsection{Minimizing the Sum of Agents' Disutilities}\label{sec:util}
To find a \minsum{} outcome, we can greedily select, at each timestep, a project with the lowest number of disapprovals at that timestep. 
It is easy to observe that this greedy algorithm runs in polynomial time.
However, just like in the case of positively-valued projects, the outcomes of the greedy algorithm may be unfair.
Indeed, consider an instance with $2\kappa+1$ agents $N=\{1,\dots, 2\kappa+1\}$, two projects $P = \{p_1,p_2\}$, and $\ell$ timesteps.
Let $\kappa+1$ agents disapprove of $p_2$ at each timestep, and the remaining $\kappa$ agents disapprove of $p_1$ at each timestep.
Then, the \textsc{Min-Sum} algorithm will select $p_1$ at every timestep, favoring the $\kappa+1$ agents and disadvantaging the other $\kappa$ agents. Arguably, this is not a fair outcome.
This motivates us to explore another welfare objective---\minmax{}---that specifically focuses on fairness.

\subsection{Minimizing the Maximum Agents’ Disutility}\label{sec:egal}
When considering the egalitarian welfare, we aim to take into account the temporal nature of our problem, by imposing constraints on agents' disutilities not just at timestep $\ell$, but also at earlier checkpoints. 

Formally, a {\em set of constraints} for an instance $\mathcal I=(N, P, T, ({\mathbf D}_i)_{i\in N})$ is a set
${\mathbf A}=\{(t_1, \lambda_1), \dots, (t_\tau, \lambda_\tau)\}$, where $t_j\in T$, $\lambda_j\in \{0, \dots, \ell\}$ for each $j\in [\tau]$, and $t_1\le\dots\le t_\tau$, $\lambda_1\le\dots\le \lambda_\tau$.

A pair $({\mathcal I}, {\mathbf A})$ defines a decision problem as follows.

\begin{tcolorbox}[title=\minmaxdec{}]
    \textbf{Input}: A problem instance $(N,P,T, (\mathbf{D}_i)_{i\in N})$ and a set of $\tau$ constraints $\mathbf{A}$.
    \tcblower
    \textbf{Question}: Is there an outcome $\mathbf{o}$ such that for each $(t, \lambda )\in \textbf{A}$ it holds that  $\max_{i \in N} d_i({\mathbf o}^{(t)}) \leq \lambda$?
\end{tcolorbox}
In words, each pair $(t_j,\lambda_j) \in \mathbf{A}$ mandates that at timestep $t_j$, the cumulative disutility of every agent should be at most $\lambda_j$.
Having a single constraint $(\ell, \lambda)$ can be seen as a decision version of \minmax{}.
We note that for utilitarian welfare constraints of this form have no impact on the choice of outcome: if there is a solution that satisfies all constraints, then so does the greedy solution described in Section~\ref{sec:util}.

We will now investigate the complexity of \minmaxdec, both in the worst case, and from a parameterized perspective.
In addition to the natural parameters of our problem, i.e., $n$, $m$ and $\ell$, we will consider a parameter $\gamma=\max_{k\in T, p\in P} |\{i: p\in D_{ik}\}|$, i.e., the maximum number of disapprovals that each project has at any timestep.

As a warm-up, we first show that the special case $\gamma=1$ admits a polynomial-time algorithm.
The constraint $\gamma = 1$ is satisfied in settings where each project is ``very local'' and only affects a single agent (but an agent can still disapprove of multiple projects).

\begin{theorem}\label{thm:gamma=1}
    If $\gamma  = 1$, there is a polynomial-time algorithm for \minmaxdec{}.
\end{theorem}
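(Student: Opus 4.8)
The plan is to reduce \minmaxdec{} with $\gamma=1$ to a single maximum-flow computation. First I would exploit the structure forced by $\gamma=1$: at any fixed timestep $k$ no project is disapproved by two agents, so the sets $\{D_{ik}\}_{i\in N}$ are pairwise disjoint, and therefore selecting any project at timestep $k$ increases the disutility of \emph{at most one} agent. Call a timestep $k$ \emph{free} if $\bigcup_{i\in N} D_{ik}\neq P$ (some project is disapproved by nobody) and \emph{forced} otherwise. Since every constraint in $\mathbf{A}$ is an upper bound on disutilities, hurting an agent is never helpful: given any feasible outcome, replacing the choice at each free timestep by an undisapproved project only decreases disutilities and preserves feasibility. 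Hence I may assume free timesteps hurt nobody, so the only relevant decisions occur at forced timesteps, where exactly one agent from $H_k:=\{i: D_{ik}\neq\emptyset\}$ must be hurt (and $H_k\neq\emptyset$ since all projects are disapproved). Writing $K$ for the set of forced timesteps in $[t_\tau]$ (later ones are unconstrained, hence irrelevant), the problem becomes: assign each $k\in K$ to some $a(k)\in H_k$ so that for every $(t_j,\lambda_j)\in\mathbf{A}$ and every agent $i$, $|\{k\in K: k\le t_j,\ a(k)=i\}|\le \lambda_j$.

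Next I would solve this assignment problem by max-flow. Partition $K$ into blocks $B_j=\{k\in K: t_{j-1}<k\le t_j\}$ (with $t_0=0$). Build a network with a source $s$, a node $u_k$ for each $k\in K$, an accumulation chain $P_{i,1}\to\cdots\to P_{i,\tau}$ for each agent $i$, and a sink $z$, with the following edges: $s\to u_k$ of capacity $1$; $u_k\to P_{i,j}$ of capacity $1$ whenever $k\in B_j$ and $i\in H_k$; $P_{i,j}\to P_{i,j+1}$ of capacity $\lambda_j$ for $j<\tau$; and $P_{i,\tau}\to z$ of capacity $\lambda_\tau$. The claim would be that the instance is a yes-instance iff the maximum flow equals $|K|$ (i.e.\ all source edges are saturated); integrality of max-flow then yields an integral assignment $a(\cdot)$, and the network has size polynomial in $n,\ell,\tau$, so the whole procedure runs in polynomial time.

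The step I expect to be the main obstacle is correctly encoding the \emph{nested, cumulative} caps $\sum_{j'\le j}x_{i,j'}\le\lambda_j$, where $x_{i,j}$ is the number of block-$B_j$ timesteps assigned to agent $i$. This is exactly where the monotonicity $\lambda_1\le\cdots\le\lambda_\tau$ is used: all flow entering agent $i$'s chain at $P_{i,1},\dots,P_{i,j}$ must traverse the edge $P_{i,j}\to P_{i,j+1}$ to reach $z$, so the flow on that edge equals the cumulative count $\sum_{j'\le j}x_{i,j'}$, and capping it at $\lambda_j$ enforces precisely the constraint for $(t_j,\lambda_j)$. I would stress that a naive greedy (assign each forced timestep to the least-loaded eligible agent) does \emph{not} suffice—one can prematurely exhaust an agent who is the unique eligible choice for a later timestep, violating an early tight cap that a different assignment would satisfy—so the global matching viewpoint afforded by the flow network is essential, and verifying its correctness reduces to the standard flow-conservation argument sketched above.
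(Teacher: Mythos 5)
Your proof is correct, and it shares the paper's first move --- at any timestep where some project is disapproved by nobody, pick such a project; at the remaining (``forced'') timesteps, since $\gamma=1$ makes the sets $\{D_{ik}\}_{i\in N}$ pairwise disjoint, the only decision is \emph{which single agent to hurt}, subject to the cumulative caps in $\mathbf{A}$ --- but the two proofs then solve this assignment problem with different combinatorial encodings. The paper builds a bipartite \emph{matching} instance whose left vertices are pairs $(i,\lambda)\in N\times[\lambda_\tau]$, i.e.\ one ``slot'' for each unit of disutility agent $i$ may absorb, with an edge $\{(i,\lambda),k\}$ present iff $i$ disapproves something at $k$ and no constraint $(k',\lambda')$ with $k'\ge k$, $\lambda'<\lambda$ would be violated by taking the $\lambda$-th hit at time $k$; feasibility is then equivalent to a matching saturating all timesteps. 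You instead encode the nested caps by \emph{capacities}: a per-agent chain $P_{i,1}\to\cdots\to P_{i,\tau}$ in a flow network, where conservation forces the flow on $P_{i,j}\to P_{i,j+1}$ to equal agent $i$'s cumulative count through block $j$, so capping that edge at $\lambda_j$ enforces constraint $(t_j,\lambda_j)$ exactly; integrality of max-flow recovers the assignment. Both are polynomial; your network is more compact (roughly $O(\ell+n\tau)$ nodes versus the paper's $O(n\lambda_\tau+\ell)$ vertices) and makes the temporal structure of the constraints explicit in the topology, while the paper's formulation needs only plain bipartite matching at the cost of hiding the constraints inside the edge definition. One small inaccuracy in your self-assessment: the monotonicity $\lambda_1\le\cdots\le\lambda_\tau$ is not actually needed for your encoding to be correct --- each chain edge independently enforces its own constraint whatever the $\lambda_j$ are (and the same is true of the paper's edge filter); monotonicity is simply guaranteed by the paper's definition of a constraint set. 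Your observation that the naive least-loaded greedy fails is apt and is precisely why both proofs resort to a global matching/flow computation.
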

\begin{proof}[Proof sketch]
    Consider a timestep $k \in T$. If there exists a project that receives no disapprovals at timestep $k$, we should select some such project. 
    Thus, without loss of generality, since $\gamma=1$, we can assume that at each timestep each project is disapproved by exactly one agent. 
    
    Construct a bipartite graph $G = (N \times [\lambda_\tau], T, E)$ with parts $N \times [\lambda_\tau]$ and $T$; for all $(i, \lambda) \in N \times [\lambda_\tau], k \in T$ the graph $G$ contains the edge $\{(i, \lambda), k\}$ if and only if agent $i$ disapproves a project at timestep $k$, and for all $\lambda' < \lambda$ and $k' \geq k$, we have $(k',\lambda') \notin \mathbf{A}$. We claim that the maximum matching in $G$ has cardinality $|T|$ if and only if there is an outcome $\mathbf{o}$ satisfying \minmaxdec{}, and, given a size-$|T|$ matching in $G$, we can transform it into an outcome that satisfies all constraints in $\mathbf A$ in polynomial time;
    the proof is deferred to the appendix.
\end{proof}

Notably, increasing $\gamma$ from $1$ to $2$ 
leads to a hardness result for \minmaxdec{},
even if there are at most two projects per timestep and $\tau=|{\mathbf A}|=1$. Importantly, since our hardness proof works for $\tau=1$, it follows that even finding \minmax{} outcomes is computationally hard.
\begin{theorem}\label{thm: NP-hard}
    \minmaxdec{} is \emph{NP}-complete, even with $m = 2$, $\gamma = 2$, and $\tau = 1$.
\end{theorem}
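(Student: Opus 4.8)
The plan is to prove membership in NP and then NP-hardness by a reduction from a restricted, sign-balanced form of 3-SAT. Membership is immediate: given a candidate outcome $\mathbf{o}$, we can compute $d_i(\mathbf{o})$ for every agent and check the single constraint $(\ell,\lambda)$ in polynomial time, so a satisfying outcome is a polynomially verifiable certificate. For the hardness direction, I would reduce from the version of 3-SAT in which every clause contains exactly three literals over distinct variables and every variable occurs \emph{at most twice positively and at most twice negatively}; this sign-balanced, bounded-occurrence restriction is known to remain NP-complete.

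Given such a formula $\phi$ with variables $y_1,\dots,y_n$ and clauses $C_1,\dots,C_{m'}$, I would build an instance with the two projects $P=\{p_1,p_2\}$ and $\ell=n$ timesteps, identifying timestep $t_v$ with variable $y_v$; the intended reading is that choosing $p_1$ at $t_v$ sets $y_v$ to \emph{true} and choosing $p_2$ sets it to \emph{false}. I introduce one agent $a_j$ per clause $C_j$, and let $a_j$ disapprove, at the timestep of each of its literals, exactly the project that makes that literal \emph{false}: for a positive literal $y_v\in C_j$ the agent $a_j$ disapproves $p_2$ at $t_v$, and for a negative literal $\bar y_v\in C_j$ it disapproves $p_1$ at $t_v$. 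The single constraint is $\mathbf{A}=\{(\ell,2)\}$, giving $\tau=1$.

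The correctness argument is a direct equivalence. For any outcome, $d_{a_j}(\mathbf{o})$ equals the number of literals of $C_j$ falsified by the induced assignment, so $d_{a_j}(\mathbf{o})\le 2$ holds exactly when at least one literal of $C_j$ is satisfied. Hence an outcome with $\max_{i} d_i(\mathbf{o}^{(\ell)})\le 2$ exists if and only if $\phi$ is satisfiable, and the two objects translate into each other in polynomial time. For the parameter bounds, $m=2$ and $\tau=1$ hold by construction, while at timestep $t_v$ the project $p_1$ is disapproved precisely by the agents whose clause contains $\bar y_v$ and $p_2$ precisely by those whose clause contains $y_v$; by the occurrence bound each count is at most two, so $\gamma\le 2$.

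The crux of the proof is exactly this last point: the reduction yields a clean $\gamma=2$ only because the source formula is sign-balanced, whereas the more common bounded-occurrence results merely cap the \emph{total} number of occurrences and could concentrate up to four disapprovals on a single project at a single timestep. The main work is therefore to secure NP-hardness for the balanced restriction. If one prefers to start from an unbalanced bounded-occurrence 3-SAT, the fix is to split each over-used variable into several copies placed at distinct timesteps and to chain the copies together with small equality gadgets that penalise disagreement (enforced by auxiliary agents, with disutilities padded via dummy timesteps where an agent disapproves both projects so that all constraints share the single threshold); this keeps the per-timestep disapproval count within two at the cost of a more careful disutility-budgeting argument, but preserves $m=2$ and $\tau=1$.
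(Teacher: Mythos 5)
Your proof is correct, and its core encoding is the same as the paper's: timesteps correspond to variables, agents to clauses, the two projects encode the truth values, each clause-agent disapproves the project that falsifies its literal, and the single constraint $(\ell,2)$ makes ``disutility at most $2$'' equivalent to ``at least one literal of the clause is satisfied.'' Where you genuinely diverge is the choice of source problem and the gadgetry this forces. The paper reduces from \textsc{3-Occur-3SAT} (clauses of length \emph{at most} $3$, each variable occurring twice positively and once negatively); after eliminating unit clauses it is left with clauses of lengths $2$ and $3$, and must therefore add one padding timestep per length-$2$ clause, forcing a unit of disutility on the corresponding agent so that all agents face the common threshold $2$. You instead start from exactly-$3$-literal clauses with each variable occurring at most twice positively and at most twice negatively, which makes the construction uniform ($\ell=n$, no preprocessing, no padding) and, as you correctly identify as the crux, is precisely what bounds $\gamma\le 2$. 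Your instinct that the source problem needs care is also right for a deeper reason: with exactly-$3$ clauses one cannot weaken the bound to at most three total occurrences per variable, since such formulas are always satisfiable by Tovey's theorem. The one loose end is your assertion that the sign-balanced restriction ``is known to remain NP-complete'': this is true---it is the NP-completeness of (3,B2)-SAT, due to Berman, Karpinski, and Scott---but it is less standard than generic bounded-occurrence 3-SAT, so you should cite it explicitly; once you do, your fallback variable-splitting gadget is unnecessary. Both routes also yield the same gap (value $\le 2$ if satisfiable, $\ge 3$ otherwise) that the paper later exploits for its $\frac{3}{2}-\varepsilon$ inapproximability corollary.
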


On the positive side, if both the number of agents and the number of constraints are small, \minmaxdec{} becomes tractable.
This provides a practical solution for small-group voting (e.g., when the voting process can be broken down into smaller districts, so that only the people that can be directly affected by the projects get to vote).
\begin{theorem}\label{thm: FPT}
    \minmaxdec{} is \emph{FPT} with respect to $n + \tau$.
\end{theorem}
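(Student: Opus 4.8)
The plan is to reduce \minmaxdec{} to an integer linear program whose numbers of variables and constraints are bounded by a function of $n+\tau$, and then invoke Lenstra's algorithm for integer programming in fixed dimension. The first observation is that at each timestep $k$ the identity of the chosen project $o_k$ matters only through its \emph{disapproval pattern} $S^k_p=\{i\in N: p\in D_{ik}\}\subseteq N$; there are at most $2^n$ distinct patterns, and the set of patterns realizable at timestep $k$ is $\mathcal S_k=\{S^k_p: p\in P\}$. Setting $t_0=0$, the checkpoints partition $T$ into segments $T_j=\{t_{j-1}+1,\dots,t_j\}$ for $j\in[\tau]$ (degenerate empty segments, arising when $t_{j-1}=t_j$, cause no difficulty, and timesteps after $t_\tau$ are unconstrained and may be ignored). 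Within a single segment I group timesteps by their realizable pattern set, placing $k,k'\in T_j$ in the same group when $\mathcal S_k=\mathcal S_{k'}$; the number of groups is at most $\tau\cdot 2^{2^n}$, a function of $n+\tau$ alone.

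I would then introduce, for each group $g$ (with realizable pattern set $\mathcal T_g$ and size $n_g$) and each pattern $S\in\mathcal T_g$, an integer variable $y_{g,S}\ge 0$ counting the timesteps of $g$ assigned a project with pattern $S$. The constraints are (i) $\sum_{S\in\mathcal T_g} y_{g,S}=n_g$ for every group $g$, and (ii) for every agent $i\in N$ and every checkpoint $j\in[\tau]$, $\sum_{g:\,\mathrm{seg}(g)\le j}\sum_{S\in\mathcal T_g:\,i\in S} y_{g,S}\le \lambda_j$, which encodes $d_i(\mathbf o^{(t_j)})\le\lambda_j$. The number of variables is at most $\tau\cdot 2^{2^n}\cdot 2^n$ and the number of constraints at most $\tau\cdot 2^{2^n}+n\tau$, both functions of $n+\tau$; the coefficients lie in $\{0,1\}$ and the right-hand sides ($n_g$ and $\lambda_j$) are bounded by $\ell$, so $m$ and $\ell$ enter only through the polynomial bit-size of the program, never the dimension.

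For correctness I would argue both directions. Given a feasible outcome, setting $y_{g,S}$ to the number of timesteps in $g$ whose chosen pattern is $S$ satisfies (i) and (ii). Conversely, given a feasible integer solution, within each group I assign the $n_g$ timesteps to patterns so that exactly $y_{g,S}$ of them receive $S$; this is always realizable because every timestep of $g$ offers the full set $\mathcal T_g$, and for each assigned pattern $S\in\mathcal T_g=\mathcal S_k$ one can pick any project achieving it. Constraint (ii) then guarantees all checkpoint bounds, and timesteps past $t_\tau$ are filled arbitrarily. Applying Lenstra's algorithm decides feasibility in time $f(n,\tau)\cdot\mathrm{poly}(|\mathcal I|)$, giving the claimed FPT bound.

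The step I expect to require the most care is the grouping: it is essential to group by the \emph{entire} realizable pattern set rather than by individual patterns, since only then is every split $(y_{g,S})_{S\in\mathcal T_g}$ summing to $n_g$ simultaneously realizable by an actual assignment (grouping by pattern alone would introduce Hall-type feasibility conditions and break the reduction). The remaining subtlety is purely bookkeeping: verifying that the variable and constraint counts depend only on $n+\tau$ while $m$ and $\ell$ affect only the (polynomial) encoding length, which is exactly what makes Lenstra's algorithm yield an FPT running time with respect to $n+\tau$.
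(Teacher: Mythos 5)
Your proposal is correct and takes essentially the same route as the paper's proof: classify timesteps by the segment between consecutive checkpoints and by the set of realizable disapproval patterns (giving at most $\tau\cdot 2^{2^n}$ classes), introduce integer variables counting how many timesteps of each class receive each pattern, impose coverage and per-agent cumulative checkpoint constraints, and invoke Lenstra's fixed-dimension integer programming algorithm. The only differences are cosmetic: the paper first duplicates each project into per-timestep copies and types projects by the complementary set (agents who do \emph{not} disapprove), whereas you define patterns per timestep directly, which cleanly sidesteps that preprocessing step.
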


If only the number of agents is small, we obtain a weaker tractability result.
\begin{theorem}\label{thm: XP}
    \minmaxdec{} is \emph{XP} with respect to $n$.
\end{theorem}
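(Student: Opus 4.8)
The plan is to solve \minmaxdec{} by dynamic programming over timesteps, using a state that records the cumulative disutility of every agent; the running time will be polynomial in the input size once $n$ is fixed, which is exactly what XP membership requires.

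The key observation is that at any timestep $k$, the only feature of a project $p$ relevant to the constraints is the set of agents who disapprove of it, $S_{kp} = \{i \in N : p \in D_{ik}\}$. Selecting $o_k = p$ simply increments the running disutility of each agent in $S_{kp}$ by one and leaves the others unchanged. Hence the entire relevant history of a partial outcome up to timestep $k$ is captured by the vector of cumulative disutilities $\mathbf{d} = (d_1(\mathbf{o}^{(k)}), \dots, d_n(\mathbf{o}^{(k)}))$, each coordinate of which lies in $\{0, 1, \dots, \ell\}$. There are therefore at most $(\ell+1)^n$ distinct reachable states, which is the bound that drives the whole argument.

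Concretely, I would maintain a set $R$ of reachable disutility vectors, initialized to $\{(0,\dots,0)\}$, and process timesteps $1, 2, \dots, t_\tau$ in order. At timestep $k$, for every $\mathbf{d} \in R$ and every project $p \in P$, I generate the successor state obtained by adding $1$ to each coordinate $i \in S_{kp}$, and collect all successors into an updated set $R'$. Whenever $k$ coincides with the timestep $t_j$ of some constraint $(t_j, \lambda_j) \in \mathbf{A}$, I then discard from $R'$ every state having a coordinate exceeding $\lambda_j$, before setting $R \gets R'$ and continuing. The algorithm answers \emph{yes} exactly when $R$ is nonempty after the final constrained timestep $t_\tau$ (timesteps beyond $t_\tau$ are irrelevant to feasibility and can be ignored).

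Correctness follows from a routine induction on $k$: a vector $\mathbf{d}$ lies in $R$ after timestep $k$ if and only if there is a partial outcome $\mathbf{o}^{(k)}$ realizing the profile $\mathbf{d}$ and satisfying every constraint $(t, \lambda) \in \mathbf{A}$ with $t \le k$; the filtering step is precisely what enforces each constraint at its checkpoint. For the running time, each of the $\ell$ timesteps processes at most $(\ell+1)^n$ states against at most $m$ projects while updating a length-$n$ vector, giving $O\bigl(\ell \cdot (\ell+1)^n \cdot m \cdot n\bigr)$ operations in total, which is polynomial in $\ell$ and $m$ for fixed $n$. I do not anticipate a deep obstacle here: the only points needing care are verifying that the disutility vector is a \emph{sufficient} state (so that two partial outcomes with equal profiles are interchangeable) and that the checkpoint filtering is applied at exactly the right timesteps; the nondecreasing ordering of the $t_j$ and $\lambda_j$ makes retaining feasibility through $t_\tau$ equivalent to satisfying all constraints.
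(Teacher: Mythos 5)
Your proposal is correct and matches the paper's own proof in essence: both run a dynamic program over timesteps whose state is the vector of cumulative agent disutilities (at most $(\ell+1)^n$ profiles), pruning states that violate a constraint at its checkpoint, which gives a running time of the form $\mathcal{O}(\ell^{n+1}\cdot m)$ up to polynomial factors and hence XP in $n$. The only cosmetic difference is that you maintain the set of reachable profiles forward and stop at the last constrained timestep, whereas the paper tabulates a Boolean indicator $Q_{u,k}$ for every profile--timestep pair through timestep $\ell$; these are the same algorithm.
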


We also prove a similar tractability result (XP) with respect to the number of timesteps $\ell$. 
This result is applicable in settings where the planning timeline is limited (e.g., short-term policy cycles), even if the number of agents or projects can be substantial.
Moreover, we show that our XP result cannot be strengthened to FPT unless FPT = W[2].
\begin{theorem}\label{thm: W[2]}
    \minmaxdec{} is \emph{XP} and \emph{W[2]}-hard with respect to $\ell$.
\end{theorem}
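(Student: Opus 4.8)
The plan is to prove the two halves of the statement by separate arguments: a brute-force enumeration for the XP upper bound, and a parameter-preserving reduction from a canonical W[2]-hard problem for the lower bound.

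For the XP membership, I would simply enumerate outcomes. An outcome is a vector in $P^\ell$, so there are exactly $m^\ell$ of them, which is polynomial in the input size once $\ell$ is treated as a constant. For a fixed candidate outcome $\mathbf{o}$ and a fixed constraint $(t,\lambda)\in\mathbf{A}$, the quantity $\max_{i\in N} d_i(\mathbf{o}^{(t)})$ is computable in $O(n\ell)$ time by scanning the disapproval sets, and we accept iff some enumerated $\mathbf{o}$ satisfies all $\tau$ constraints. This runs in time $m^{O(\ell)}\cdot\mathrm{poly}(n,m,\ell,\tau)$, establishing XP with respect to $\ell$.

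For W[2]-hardness, I would reduce from \textsc{Set Cover} parameterized by the solution size $k$, which is W[2]-hard. Given a universe $U$, a family $S_1,\dots,S_m\subseteq U$, and a budget $k$, build the instance with one agent $i_u$ per element $u\in U$, one project $p_j$ per set $S_j$, and $\ell=k$ timesteps. The crucial idea is to flip the sign of the objective: I let each agent $i_u$ disapprove, at every timestep, exactly those projects that do not contain $u$, i.e.\ $D_{i_u,t}=\{p_j: u\notin S_j\}$ for all $t$. Using the single constraint $\mathbf{A}=\{(\ell,\ell-1)\}$ then encodes coverage, since for any outcome $\mathbf{o}$ one has $d_{i_u}(\mathbf{o})=\ell-|\{t\in[\ell]: u\in o_t\}|$, so $\max_u d_{i_u}(\mathbf{o})\le \ell-1$ holds exactly when every $u$ lies in at least one selected set.

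Correctness then follows because selecting at most $\ell=k$ sets that cover $U$ yields a valid outcome (padding unused timesteps by repeating any chosen project only helps coverage), and conversely the distinct projects used by any valid outcome form a cover of size at most $k$. Since the construction is polynomial and maps the parameter $k$ to $\ell=k$, this is a valid parameterized reduction, so \minmaxdec{} is W[2]-hard with respect to $\ell$; note that a single constraint ($\tau=1$) already suffices. The main obstacle is conceptual rather than technical: the native objective asks to minimize disapprovals whereas set cover asks to hit every element at least once, and the resolution is the complementation trick above together with the threshold $\lambda=\ell-1$.
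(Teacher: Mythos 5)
Your proposal is correct and follows essentially the same route as the paper: the XP part is the same brute-force enumeration of the $m^\ell$ outcomes, and the hardness part uses the same complementation trick (agents disapprove exactly the projects that do not ``cover'' them, with a single constraint $(\ell,\ell-1)$ and $\ell$ equal to the budget). The only cosmetic difference is that you reduce from \textsc{Set Cover} while the paper reduces from \textsc{Dominating Set}; since the latter is just the special case of set cover by closed neighborhoods, the two reductions are structurally identical.
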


Having explored our problem from the parameterized complexity perspective, we would like to understand whether it admits an approximation algorithm. However,  \minmaxdec{} is inherently a feasibility problem, so it does not have a natural optimization version.
Nevertheless, we can provide some insights for the case $\tau = 1$, which corresponds to the standard notion of egalitarian welfare.

We first present an inapproximability result, derived as a corollary of the proof of Theorem~\ref{thm: NP-hard}: our reduction from \textsc{3-Occur-3SAT} shows that we cannot hope to obtain an
approximation ratio better than $3/2$ in polynomial time.
\begin{corollary}
    For any $\varepsilon > 0$, if there exists a polynomial-time algorithm for \minmaxdec{} with $\tau = 1$ and approximation ratio $\frac{3}{2} - \varepsilon$, then \emph{P = NP}.
\end{corollary}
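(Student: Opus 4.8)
The plan is to convert the NP-hardness reduction of Theorem~\ref{thm: NP-hard} into a \emph{promise gap} for the egalitarian-welfare optimization problem and then exploit the integrality of disutilities. Recall that this reduction maps a \textsc{3-Occur-3SAT} formula $\phi$ to an instance $\mathcal I$ of \minmaxdec{} equipped with a single constraint $(\ell, \lambda)$. The first thing I would do is read off from that construction the precise threshold, which I expect to be $\lambda = 2$, and re-examine the completeness and soundness arguments to confirm the following gap behaviour: if $\phi$ is satisfiable then there is an outcome $\mathbf{o}$ with $\max_{i\in N} d_i(\mathbf{o}) \le 2$, whereas if $\phi$ is unsatisfiable then every outcome $\mathbf{o}$ satisfies $\max_{i\in N} d_i(\mathbf{o}) \ge 3$. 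Pinning down these two constants is the heart of the matter, since the ratio $\tfrac{3}{2}$ arises precisely from the pair $(2,3)$; this is the step I would verify most carefully against the details of the reduction.

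With the gap in hand, I would argue by contradiction. Suppose $\mathcal{A}$ is a polynomial-time algorithm that, on the optimization version of \minmaxdec{} with $\tau = 1$ (i.e.\ minimizing $\max_{i\in N} d_i(\mathbf{o})$), returns an outcome whose egalitarian disutility is at most $(\tfrac{3}{2} - \varepsilon)\cdot \mathrm{OPT}$, for some fixed $\varepsilon > 0$, where $\mathrm{OPT}$ is the optimum value. Given a formula $\phi$, I would build the instance $\mathcal I$ via the reduction and run $\mathcal{A}$ on it, obtaining an outcome $\mathbf{o}$ with $\max_{i\in N} d_i(\mathbf{o}) \le (\tfrac{3}{2} - \varepsilon)\cdot \mathrm{OPT}$.

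The decisive point is that each $d_i(\mathbf{o})$ is a nonnegative integer, so $\max_{i\in N} d_i(\mathbf{o})$ is integral. If $\phi$ is satisfiable, then $\mathrm{OPT} \le 2$, and $\mathcal{A}$ returns a value at most $(\tfrac{3}{2} - \varepsilon)\cdot 2 = 3 - 2\varepsilon < 3$, hence at most $2$ by integrality. If $\phi$ is unsatisfiable, then $\mathrm{OPT} \ge 3$, so $\mathcal{A}$ necessarily returns a value of at least $3$. Thus checking whether the value of $\mathcal{A}$'s output is $\le 2$ or $\ge 3$ decides the satisfiability of $\phi$ in polynomial time. Since \textsc{3-Occur-3SAT} is NP-hard, this would imply $\mathrm{P} = \mathrm{NP}$. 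The only part demanding genuine care is the first paragraph---confirming that the reduction can be read as a promise gap between optima $2$ and $3$---while the rounding-and-integrality argument that follows is routine.
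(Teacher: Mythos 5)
Your proposal is correct and matches the paper's intended argument exactly: the reduction in Theorem~\ref{thm: NP-hard} indeed uses the single constraint $(m'+n,2)$, so satisfiable formulas yield $\mathrm{OPT}\le 2$ while unsatisfiable ones force $\mathrm{OPT}\ge 3$, and the integrality-plus-gap argument you give is precisely how the paper derives the $\frac{3}{2}-\varepsilon$ inapproximability. The paper states this corollary without a separate written proof, and your reconstruction is the intended one.
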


Now, let $\ell^+$ be the number of timesteps where every project is disapproved by some agent.
Then, we give an algorithm that achieves a $\min(m, 1 + \frac{n^2}{\ell^+})$ approximation ratio, thereby providing an upper bound.

Notably, there are many settings (including the one described in our introductory example) where the number of projects is small (e.g., the projects have been shortlisted by the local government and put up for voting).
Moreover, if $\ell^+$ is large relative to $n$, the quantity $1 + \frac{n^2}{\ell^+}$ is close to $1$.
 
\begin{theorem} \label{thm:2_approx}
    There exists a $\min(m , 1+\frac{n^2}{\ell^+})$-approximation algorithm for the \minmax{} objective.
\end{theorem}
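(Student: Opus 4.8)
The plan is to reduce the problem to the ``bad'' timesteps, solve a natural assignment LP, and then round it in \emph{two} different ways---one tuned to each of the two target ratios---returning the better of the two resulting outcomes. First I would discard the easy timesteps: wherever some project is disapproved by no agent, picking it contributes $0$ to everyone, so only the $\ell^+$ bad timesteps matter. At each bad timestep every project is disapproved by at least one agent, hence $\sum_{i\in N} d_i(\mathbf{o}) \ge \ell^+$ for any outcome that behaves optimally on the easy timesteps. Applying this to an optimal outcome and averaging over the $n$ agents yields the key lower bound $\mathrm{OPT} \ge \ell^+/n$, equivalently $\ell^+ \le n\cdot\mathrm{OPT}$. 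This single inequality is exactly what will later convert an \emph{additive} guarantee of the form $\mathrm{OPT}+n$ into the \emph{multiplicative} factor $1+\frac{n^2}{\ell^+}$.

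Next I set up the LP with a variable $x_{kp}\in[0,1]$ for each bad timestep $k$ and project $p$, constraints $\sum_p x_{kp}=1$ for each $k$ and $\sum_{k,\,p:\,p\in D_{ik}} x_{kp}\le z$ for each agent $i$, minimizing $z$. Every integral outcome is feasible, so the optimum satisfies $z_{\mathrm{LP}}\le\mathrm{OPT}$. For the factor-$m$ bound I round by assigning each bad timestep $k$ to a project $o_k$ maximizing $x_{ko_k}$; since the $m$ weights at $k$ sum to $1$, we have $x_{ko_k}\ge 1/m$. Whenever agent $i$ disapproves $o_k$, that timestep already contributes at least $1/m$ to $i$'s fractional load, so $i$'s integral disutility is at most $m$ times its LP load, i.e.\ at most $m\cdot z_{\mathrm{LP}}\le m\cdot\mathrm{OPT}$.

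For the factor-$\bigl(1+\frac{n^2}{\ell^+}\bigr)$ bound I instead take a \emph{basic} optimal LP solution. A standard extreme-point count bounds the number of strictly positive variables by the number of tight constraints, which is at most $\ell^+ + n$ (the $\ell^+$ assignment equalities plus at most $n$ tight load inequalities); since each of the $\ell^+$ timesteps already uses at least one positive variable, at most $n$ timesteps are fractional. I keep the integral assignment at the remaining timesteps---there each agent's load is bounded by its full LP load $\le z_{\mathrm{LP}}$---and round each of the $\le n$ fractional timesteps to an arbitrary supported project, which raises any agent's load by at most $n$. Thus the maximum disutility is at most $z_{\mathrm{LP}}+n\le\mathrm{OPT}+n\le \bigl(1+\tfrac{n^2}{\ell^+}\bigr)\mathrm{OPT}$, where the last step uses $\ell^+\le n\cdot\mathrm{OPT}$.

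Running both roundings on the same LP optimum and returning the outcome with smaller maximum disutility therefore achieves ratio $\min\!\bigl(m,\,1+\tfrac{n^2}{\ell^+}\bigr)$, and both roundings run in polynomial time. I expect the extreme-point counting to be the main obstacle: one must argue carefully that the only equality constraints are the $\ell^+$ assignment constraints (so that the fractional timesteps are controlled by the $n$ agent constraints), and then verify that discarding the fractional timesteps genuinely cannot push the retained load above $z_{\mathrm{LP}}$. The factor-$m$ rounding, by contrast, is the routine part, and the lower bound $\mathrm{OPT}\ge\ell^+/n$ is the glue that ties the additive and multiplicative views together.
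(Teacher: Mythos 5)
Your proposal is correct and follows essentially the same route as the paper's proof: reduce to the $\ell^+$ bad timesteps, solve the assignment/load LP, use the argmax rounding for the factor $m$, and use a vertex (basic) optimal solution with the tight-constraint count to bound the number of fractional timesteps by $n$, yielding the additive-$n$ guarantee converted via $\ell^+ \le n\cdot\mathrm{OPT}$. The only cosmetic differences are that the paper shows a single argmax rounding achieves both bounds simultaneously (rather than taking the better of two roundings) and derives the lower bound $\eta \ge \ell^+/n$ directly from LP feasibility rather than by averaging over an integral outcome.
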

\begin{proof}
    First, for the timesteps where some project receives no disapprovals, we can always choose some such project. Hence, it suffices to consider the $\ell^+$ timesteps in which each project is disapproved by some agent.
    From now on, we will assume that $\ell=\ell^+$.
    
    We construct a polynomial-size integer program for finding \minmax{} outcomes as follows. 
    For each $p \in P$ and $k \in T$, we define a variable $c_{(p,k)}\in \{0,1\}$: $c_{(p, k)}=1$ if and only if project $p$ is selected at timestep $k$.   
    Our constraints require that (1) for each $k\in T$, at least one project has to be chosen in timestep $k$: 
    $\sum\limits_{p\in P} c_{(p,k)}\geq 1$, and (2) the disutility of each agent $i\in N$ is at most $\eta$:
        $\sum\limits_{k\in T} \sum\limits_{p \in D_{ik}} c_{(p,k)}  \leq \eta$. 
    By relaxing the $0$--$1$ variables $c_{(p,k)}$ to take values in $\mathbb{R}_+$, 
    we obtain the following LP relaxation:
\begin{align*}
\text{minimize} \quad & \eta \tag{P1}\label{approx-general-lp}\\
\text{subject to} \quad 
    & \sum\limits_{p\in P} c_{(p,k)} \geq 1, \quad \text{for all } k \in T, \\
    & \sum\limits_{k \in T} \sum\limits_{p \in D_{ik}} c_{(p,k)} \leq \eta, \quad \text{for all } i \in N, \\
    & c_{(p,k)} \geq 0, \quad \text{for all } p \in P \text{ and } k \in T.
\end{align*}   

Let $((c^*_{(p,k)})_{p\in P, k\in T} , \eta )$ be an optimal solution to P1 that lies at a vertex of the respective polytope.  Construct an outcome $\mathbf{o}$ by selecting, at each timestep $k\in T$, a project $p^* \in \argmax_{p \in P} c^*_{(p,k)}$.  We first show that $\max_{i \in N} d_i(\mathbf{o}) \leq m \cdot \eta$ and then argue that $\max_{i \in N} d_i(\mathbf{o}) \leq \left(1 + \frac{n^2}{\ell}\right) \cdot \eta$. As the value of the optimal integer solution is at least $\eta$, this proves the theorem.

For each $i\in N$, let $T_i=\{k: o_k\in D_{ik}\}$ be the set of timesteps where $i$ disapproves the outcome selected by $\mathbf o$. Consider a timestep 
$k \in T$. Since $\sum_{p \in P} c^*_{(p,k)} \geq 1$, we have $c^*_{(o_k,k)} \geq \frac{1}{m}$. 
Thus, for all $i \in N$ we have $d_i(\mathbf{o}) = |T_i| \leq \sum_{k \in T_i} m \cdot c^*_{(o_k,k)} \leq \sum_{k \in T} m \cdot c^*_{(o_k,k)} \leq m \cdot \eta$. 
Hence, $\max_{i \in N} d_i(\mathbf{o}) \leq m \cdot \eta$.

 Next, we will show that $d_i(\mathbf{o}) \leq \left(1 + \frac{n^2}{\ell}\right) \cdot \eta$ for all $i\in N$, and thus $\max_{i \in N} d_i(\mathbf{o}) \leq \left(1 + \frac{n^2}{\ell}\right) \cdot \eta$. We first note that our linear program \ref{approx-general-lp} has 
 $m\ell + 1$ variables and $n + \ell + m\ell$ constraints. As we consider a vertex solution, there are at least $m\ell+1$ constraints that are tight. Thus, at most $n + \ell$ variables of the form $c_{(p,k)}$ are non-zero. Let $T'$ be the set of timesteps where at least two projects are assigned a positive weight by our solution to the LP. For each $k \in T$ we have
 $\sum_{p\in P} c^*_{(p,k)} \geq 1$, so each $k\in T$ contributes at least one non-zero variable. Thus, there are at most $(n+\ell)-|T|=n$ additional non-zero variables, i.e., $|T'|\le n$. 
 Moreover, $c^*_{(o_k, k)}=1$ for all $k\in T\setminus T'$, so for each $i\in N$ we have
 \begin{align*}
 |\{k\in T\setminus T': o_k\in D_{ik}\}|= 
 \sum_{k\in T\setminus T': o_k\in D_{ik}}c^*_{(o_k, k)}\\
 \le \sum_{k\in T\setminus T'}\sum_{p \in D_{ik}} c^*_{(p,k)}\le \sum_{k\in T}\sum_{p \in D_{ik}} c^*_{(p,k)}\le \eta,
 \end{align*}
 where the last transition follows since 
 $((c^*_{(p,k)})_{p\in P, k\in T} , \eta )$ is feasible for P1.
 Therefore, for each $i \in N$ we have $d_i(\mathbf{o})= |\{k\in T: o_k\in D_{ik}\}|
     = |\{k \in  T': o_k\in D_{ik}\}| + |\{k\in T\setminus T': o_k\in D_{ik}\}|  \leq |T'|+\eta \leq n+\eta$. 
     
     Furthermore, as each project is disapproved by at least one agent at each timestep, we have 
     \begin{align*}
     n\cdot \eta&\ge \sum\limits_{i \in N} \sum\limits_{k \in T} \sum\limits_{p \in D_{ik}} c^*_{(p,k)} = \sum\limits_{k \in T} \sum\limits_{i\in N}\sum\limits_{p \in D_{ik}} c^*_{(p,k)} \\
     &\ge\sum\limits_{k \in T}\sum\limits_{p\in P} c^*_{(p,k)}\geq \ell, 
     \end{align*}
     and hence $n^2/\ell\ge n/\eta$. Thus, for all agents $i \in N$ we have $d_i(\mathbf{o}) \leq \frac{\eta + n}{\eta} \cdot \eta = \left(1 + \frac{n}{\eta}\right)  \cdot \eta \leq  \left(1 + \frac{n^2}{\ell}\right) \cdot \eta$.

       It is easy to see that the described algorithm runs in polynomial time with respect to $n$, $m$, and $\ell$.
\end{proof}

\section{Price of Temporal Fairness} \label{sec:potf}
Next, we consider the impact of imposing temporal constraints $\mathbf{A}$ on the agents' welfare. Our analysis belongs to the line of work on
the \emph{price of fairness}, initiated 
by Bei {\em et al.}~\shortcite{bei2021price}. Similar questions have been considered by a number of authors in the multiwinner voting \cite{brill2024priceable,elkind2022price,lackner2020util} and temporal voting literature \cite{elkind2024temporalelections}. Unsurprisingly, our setting, too, exhibits a fundamental tension between fairness and efficiency.

Our analysis applies to both \minsum{} and \minmax{}, but we only consider constraints of the form $\max_{i\in N} d_i({\mathbf o}^{(t)})\le \lambda$. We note that constraints of the form $\sum_{i\in N} d_i({\mathbf o}^{(t)})\le \lambda$ have no impact on the utilitarian welfare, but may reduce egalitarian welfare. While it may be interesting to investigate the impact of utilitarian constraints on egalitarian welfare, this question does not quite fit the price of fairness framework, so we leave it to future work. 

Given a problem instance $\mathcal{I}$ and a set of constraints $\textbf{A} = \{(t_1, \lambda_1), \dots, (t_\tau, \lambda_\tau)\}$, let $\Pi_{\textbf{A}}(\mathcal{I})$ $\subseteq \Pi({\mathcal I})$ denote the set of outcomes for $\mathcal{I}$ that satisfy constraints in $\textbf{A}$. We say that $\mathbf A$ is {\em feasible} if $\Pi_{\textbf{A}}(\mathcal{I})\neq\varnothing$, i.e., if there is an outcome that satisfies all constraints in $\mathbf A$.

For the objectives \minsum{} and \minmax{}, let $\sum$ and $\max$ be the corresponding \emph{welfare operation} (with respect to agents' disutilities).
Then, for a welfare objective $W \in \{\minsum{}, \minmax{}\}$,  we denote by $W_\text{op}(\mathbf{o})$ the \emph{$W$-value} of an outcome $\mathbf{o}$, i.e., the result of applying the welfare operation corresponding to $W$ to $\mathbf{o}$.

We now define the \emph{price of temporal fairness}, which measures the cost of imposing (feasible) temporal constraints $\mathbf{A}$.
\begin{definition}[Price of Temporal Fairness]
    For an objective $W \in \{\minsum{}, \minmax{}\}$, the \emph{price of temporal fairness with respect to $W$} (PoTF$_W$) is the 
    supremum over all instances $\mathcal I$ and feasible constraints $\mathbf{A}$ of the ratio between the minimum $W$-value of an outcome for $\mathcal I$ that satisfies $\textbf{A}$ and the minimum $W$-value of an outcome for $\mathcal I$:
    \begin{equation*}
        \text{PoTF}_W = \sup_{\mathcal{I},\mathbf{A}:\mathbf{A} \text{ feasible}} \frac{\min_{\mathbf{o} \in \Pi_{\textbf{A}}(\mathcal{I})} W_\text{op}(\mathbf{o})}{\min_{\mathbf{o} \in \Pi(\mathcal{I})} W_\text{op}(\mathbf{o})}.
    \end{equation*}
\end{definition}

 We derive tight bounds for both welfare objectives. For \minsum{}, the price of temporal fairness scales with the number of agents.

\begin{theorem} \label{thm:potf_minsum}
    \emph{PoTF}$_\minsum{}=\Theta(n)$.
\end{theorem}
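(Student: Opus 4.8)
The plan is to prove the two bounds $\mathrm{PoTF}_{\minsum}=\Omega(n)$ and $\mathrm{PoTF}_{\minsum}=O(n)$ separately, both exploiting the fact that an unconstrained \minsum{} outcome decomposes across timesteps (the greedy rule of Section~\ref{sec:util} is optimal), so its value equals $\mathrm{OPT}:=\sum_{k\in T} g_k$, where $g_k=\min_{p\in P}\lvert\{i: p\in D_{ik}\}\rvert$ is the least number of disapprovals available at timestep $k$. For the lower bound I would reuse the instance sketched at the start of Section~\ref{sec:util}: $n$ agents, two projects $p_1,p_2$, and $\ell$ (even) timesteps, where agent $1$ disapproves $p_1$ at every timestep and each of the remaining $n-1$ agents disapproves $p_2$ at every timestep. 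Here $\mathrm{OPT}=\ell$ (always pick $p_1$). Imposing the single feasible constraint $\mathbf A=\{(\ell,\ell/2)\}$ forces $p_1$ and $p_2$ to each be chosen exactly $\ell/2$ times (agent $1$ caps the count of $p_1$, while the other agents cap the count of $p_2$), so every feasible outcome has \minsum{}-value $\tfrac{\ell}{2}\cdot 1+\tfrac{\ell}{2}\cdot(n-1)=\tfrac{\ell n}{2}$. The ratio is $n/2$, giving $\mathrm{PoTF}_{\minsum}=\Omega(n)$.

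For the upper bound I would avoid the naive estimate ``each agent contributes at most $\lambda_\tau$, hence the sum is at most $n\lambda_\tau$,'' since $\lambda_\tau$ may be far larger than $\mathrm{OPT}$ when the constraints are loose. Instead, let $T^+=\{k\in T: g_k\ge 1\}$ be the set of timesteps at which \emph{every} project is disapproved by someone. At any timestep outside $T^+$ there is a project with no disapprovals, and replacing the chosen project there by such a project never increases any agent's disutility at any truncation; hence, starting from any feasible outcome (which exists since $\mathbf A$ is feasible), we obtain a feasible outcome that contributes $0$ on $T\setminus T^+$, so it is without loss of generality to restrict attention to $T^+$. On each timestep of $T^+$ the chosen project is disapproved by at most $n$ agents, so this feasible outcome---and therefore the constrained optimum---has \minsum{}-value at most $n\lvert T^+\rvert$. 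On the other hand $\mathrm{OPT}=\sum_{k\in T^+} g_k\ge\lvert T^+\rvert$ because $g_k\ge 1$ on $T^+$. Dividing yields a ratio of at most $n$ (the degenerate case $T^+=\varnothing$, i.e.\ $\mathrm{OPT}=0$, is handled by noting the all-harmless outcome satisfies every constraint, so the ratio is $1$). Combining the two bounds gives $\mathrm{PoTF}_{\minsum}=\Theta(n)$.

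The main obstacle is the upper bound: the natural move is to bound the feasible sum through the threshold $\lambda_\tau$, but this is lossy exactly when the constraints do not bind. The key idea is to localize the blow-up to the ``forced'' timesteps $T^+$, where the unconstrained optimum must already pay at least $1$ while \emph{any} outcome pays at most $n$; this converts the per-timestep ratio $n$ into the global ratio without reference to $\lambda_\tau$. A secondary point to verify is that the intermediate constraints (not only the final $(t_\tau,\lambda_\tau)$) cause no difficulty: because disutilities are monotone along truncations and substituting harmless projects only lowers them, the single modified feasible outcome respects all prefix constraints simultaneously.
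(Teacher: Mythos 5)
Your proposal is correct and follows essentially the same route as the paper: the upper bound via the set $T^+$ of timesteps where every project is disapproved (OPT $\geq |T^+|$, while a modified feasible outcome pays at most $n|T^+|$), and the lower bound via the two-project instance where one agent opposes $p_1$ and the rest oppose $p_2$, with a single halving constraint (the paper takes $\ell = 2$ with constraint $(2,1)$; yours is the same construction for general even $\ell$). Your explicit verification that the harmless-project substitution preserves all prefix constraints, and your handling of the degenerate case $T^+ = \varnothing$, are details the paper leaves implicit.
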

\begin{proof}
    We first prove the upper bound of $\mathcal{O}(n)$.
    Let $T^+=\{k\in T: \cup_{i\in N}D_{ik}=P\}$; for each timestep $k$ in $T^+$ each project is disapproved by at least one agent, so no matter how we select $o_k$, this will contribute to the disutility of some agent. Hence, for each outcome $\mathbf o$ we have $\sum_{i \in N} d_i(\mathbf{o}) \geq |T^+|$. 
    On the other hand, since $\mathbf A$ is feasible, there is an outcome $\mathbf o$ that satisfies all constraints in $\mathbf A$. We modify this outcome as follows: for each $k\in T\setminus T^+$ the set $P\setminus \cup_{i\in N}D_{ik}$ is not empty, so let $o_k$ be some project in this set. For the modified outcome, the disutility of every agent is at most $|T^+|$, so the total disutility is at most $n\cdot |T^+|$. This completes the proof of the upper bound.

    For the lower bound, consider an instance with two projects $P = \{p_1,p_2\}$ and two timesteps.
    Let $D_{11} = D_{12} = \{p_1\}$ and $D_{i1} = D_{i2} = \{p_2\}$ for all other agents $i \in N \setminus \{1\}$. Let $\mathbf{A}$ 
    contain a single constraint $(2, 1)$, which requires that at the end of timestep $2$
    the disutility of every agent is at most $1$. The only outcomes that satisfy this constraint are $(p_1, p_2)$ and $(p_2, p_1)$; under either outcome the total disutility is $n$. On the other hand, the total disutility of $(p_1, p_1)$ is $2$.
    Thus, $\text{PoTF}_\minsum \geq \frac{n}{2} = \Omega(n)$.
\end{proof}

For \minmax{}, the price of fairness is small if there are few constraints or if the number of agents is small.
\begin{theorem} \label{thm:potf_minmax}
    \emph{PoTF}$_\minmax{}=\Theta(\min(\tau,n))$.
\end{theorem}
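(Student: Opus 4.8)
I would prove matching bounds, establishing the upper bound $\mathrm{PoTF}_{\minmax}=O(\min(\tau,n))$ by showing the two separate bounds $O(\tau)$ and $O(n)$, and complementing them with a lower bound $\Omega(\min(\tau,n))$. Throughout, write $W^*$ for the unconstrained \minmax{} optimum and $W^{\mathbf A}$ for the constrained optimum, so that the ratio of interest is $W^{\mathbf A}/W^*$. As a preliminary reduction I would pass to the timesteps $T^+=\{k: \bigcup_{i\in N}D_{ik}=P\}$ on which every project is disapproved: outside $T^+$ there is a cost-free project, and always selecting it can only lower everyone's disutility, hence never violates a constraint. This already yields the $O(n)$ bound: every outcome incurs total disutility at least $|T^+|$, so $W^*\ge |T^+|/n$; conversely, taking any feasible outcome (which exists since $\mathbf A$ is feasible) and replacing its choices outside $T^+$ by free projects gives a feasible outcome whose \minmax{} value is at most $|T^+|\le n\,W^*$.

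For the $O(\tau)$ bound I would exploit the monotonicity $\lambda_1\le\dots\le\lambda_\tau$ together with the elementary fact that $d_i(\mathbf o^{(t_j)})\le d_i(\mathbf o)$ for every outcome, so the global optimum $\mathbf o^*$ satisfies $\max_i d_i((\mathbf o^*)^{(t_j)})\le W^*$ and therefore already respects every constraint with $\lambda_j\ge W^*$; if all constraints are of this type the ratio is $1$. The only obstructions are the \emph{tight} constraints $\lambda_j<W^*$, which by monotonicity form a prefix $1,\dots,j^*$. The construction I would use follows a feasible outcome on the tight prefix $[1,t_{j^*}]$ and then switches to $\mathbf o^*$; because the suffix contributions of a single fixed outcome $\mathbf o^*$ sum (not multiply) to at most $W^*$ per agent, the final value is of order $\lambda_{j^*}+W^*$, and charging $O(W^*)$ per constraint yields $O(\tau)\,W^*$. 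Taking the better of the two bounds gives $O(\min(\tau,n))$.

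For the lower bound I would build a nested ``staircase'' instance with a single \emph{persistent victim} agent and a batch of \emph{fresh absorbers} introduced in each of $k=\min(\tau,n)$ rounds. In each round a block of unavoidable hits can be spread either onto the fresh absorbers or onto the victim, and an intermediate constraint caps every agent's cumulative disutility just tightly enough that the absorbers cannot swallow the whole block, forcing a constant amount of disutility onto the victim; crucially the final block is loaded onto the victim \emph{after} the last constraint, so no $\lambda_j$ bounds it. Unconstrained, all the round-disutility is dumped on absorbers (each hit only $O(1)$ times) while the victim is spared, giving $W^*=O(1)$; the constraints force the victim's load to grow by $\Omega(1)$ per round, reaching $\Omega(k)$. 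When $\tau\le n$ the depth $\tau$ drives the bound; when $n\le\tau$ the number of available absorbers (hence $n$) caps the victim's attainable load, so the two regimes together give $\Omega(\min(\tau,n))$.

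\noindent\textbf{Main obstacle.}
I expect the delicate point to be the $O(\tau)$ upper bound: after switching from the feasible prefix outcome to the global optimum, the accumulated prefix disutility plus the optimum's suffix can \emph{overshoot} the ``medium'' constraints $\lambda_j\in[W^*,O(\tau)W^*)$, so the switch point must be chosen at the last constraint below a threshold of order $\tau W^*$ and the monotonicity of the $\lambda_j$ invoked to show all later constraints survive. The mirror-image difficulty on the lower-bound side is verifying that the chosen $\lambda_j$ are simultaneously loose enough to be feasible yet tight enough to force the redistribution at every round without letting an alternative feasible outcome escape the forced load.
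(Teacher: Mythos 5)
Your overall architecture (separate $O(\tau)$ and $O(n)$ upper bounds plus a matching lower bound) mirrors the paper's, and your $O(n)$ argument via $T^+$ is sound. However, the other two components have genuine gaps. For the $O(\tau)$ bound, the prefix-then-switch construction is the right shape, but everything hinges on the choice of switch index, and neither of your proposed rules works. Switching at the last constraint with $\lambda_{j^*} < W^*$ fails against a later constraint with $\lambda_j = W^*$: the feasible prefix may contribute $\lambda_{j^*} = W^*-1$ by time $t_{j^*}$, and the suffix of $\mathbf{o}^*$ up to $W^*$ more by time $t_j$, giving $2W^*-1 > \lambda_j$. Your proposed repair --- switching at the last constraint below a fixed threshold of order $\tau W^*$ --- fails identically: monotonicity of the $\lambda_j$ gives no lower bound on the gap between the last $\lambda_{j^*}$ below the threshold and the first $\lambda_j$ above it, and whenever that gap is smaller than $W^*$ the constraint at $t_j$ can be violated. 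The missing idea is an \emph{index-dependent} threshold: take the largest $i$ with $\lambda_i \le i\,W^*$ (setting $\lambda_0 = 0$ so that such an $i$ exists). Maximality then gives $\lambda_j > j W^* \ge (i+1)W^* \ge \lambda_i + W^*$ for every $j > i$, so the switched outcome, whose disutility never exceeds $\lambda_i + W^* \le (\tau+1)W^*$, automatically satisfies all later constraints; this is exactly the paper's argument. (Equivalently, by pigeonhole some interval $[sW^*,(s+1)W^*)$ with $s \le \tau$ contains no $\lambda_j$, and one can cut there.)

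The lower bound is a more serious problem: the persistent-victim/fresh-absorber template cannot yield $\Omega(\min(\tau,n))$. Constraints are uniform over agents, and feasibility of $\mathbf{A}$ forces $\lambda_r$ to be at least the victim's forced accumulation, i.e., $\lambda_r = \Omega(r)$; a fresh absorber therefore enters round $r$ with slack $\Omega(r)$ against every remaining constraint. Hence, to force the victim to take a hit in round $r$, every victim-avoiding allocation of that round's block must push some single absorber above $\lambda_r \ge r$ hits. But then the unconstrained optimum $C$ is also $\Omega(k)$: an unconstrained outcome of value $C$ hits the victim at most $C$ times, so if $C < k/2$ there is a victim-avoiding round of index $r \ge k/2 > C$, in which some absorber receives more than $\lambda_r \ge r > C$ hits --- a contradiction; so $C \ge k/2$, and the ratio collapses to $O(1)$. (Deferring the victim's load to a block after the last constraint does not help: with no constraint left, the constrained and unconstrained optimizers face the identical subproblem there, so no forcing occurs.) The paper's construction avoids this trap by reusing the \emph{same} $\min(\tau,n)$ agents as one another's absorbers so that all disutilities grow in lockstep: round $i$ consists of two timesteps in which agent $i$ disapproves $p_1$ and everyone else disapproves $p_2$, with cap $\lambda_i = i$; every agent enters round $i$ with disutility exactly $i-1$, so the cap forces one selection of each project per round, giving constrained value $\min(\tau,n)$, while the unconstrained outcome $(p_1,\dots,p_1)$ has value $2$.
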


Theorem~\ref{thm:potf_minmax} shows that one needs to proceed with caution when imposing fairness constraints, especially if the number of agents is large.

\section{Strategic Manipulation} \label{sec:sp}
In our motivating example, ensuring that agents cannot engage in strategic manipulation is vital for maintaining trust and participation in the voting process. 
To address these concerns, we will now focus on agents' strategic considerations with respect to both welfare objectives.

One popular concept in the social choice literature is
that of \emph{strategyproofness}, which states that no agent should be able to strictly benefit (in this case, decrease their disutility) by misreporting their preferences.
It is formally defined as follows. Note that agent $i$'s disutility function $d_i$ is computed with respect to her (truthful) disapproval vector $\mathbf{D}_i$.

\begin{definition}[Strategyproofness] \label{defn-SP}
    For each $i\in N$, let $\mathcal{D}_{-i}$ denote the list of all disapproval vectors except that of agent $i$: 
    $\mathcal{D}_{-i} = (\mathbf{D}_1,\dots,\mathbf{D}_{i-1},\mathbf{D}_{i+1},\dots,\mathbf{D}_n)$.
    A mechanism $\mechanism$ is \emph{strategyproof} (SP) if for each instance $(N, P, T, ({\mathbf D}_i)_{i\in N})$, each
    agent $i \in N$ and each disapproval vector $\mathbf{D}'_i$
    it holds that 
    $d_i(\mechanism(\mathcal{D}_{-i}, \mathbf{D}_i)) \leq d_i(\mechanism(\mathcal{D}_{-i}, \mathbf{D}'_i))$.
    \end{definition}
    
\subsection{Mechanisms for \minsum{}}
    We first show that the greedy algorithm for obtaining a \minsum{} outcome, which  chooses a project with the lowest number of disapprovals at each timestep (with lexicographical tie-breaking) is strategyproof. We will refer to this algorithm as \textsc{Greedy Min-Sum}.

\begin{theorem} \label{thm:minsum_sp}
    \textsc{Greedy Min-Sum} is strategyproof.
\end{theorem}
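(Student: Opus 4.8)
The plan is to exploit the fact that \textsc{Greedy Min-Sum} is fully \emph{separable across timesteps}: the project $o_k$ chosen at timestep $k$ depends only on the reported disapproval sets at that timestep, and agent $i$'s disutility decomposes as $d_i(\mathbf o)=\sum_{k\in T}\mathbf 1[o_k\in D_{ik}]$. Since a deviation $\mathbf D_i'=(D_{i1}',\dots,D_{i\ell}')$ influences the choice at timestep $k$ only through its $k$-th component $D_{ik}'$, it suffices to prove that at each individual timestep agent $i$ cannot lower her per-timestep disutility (an indicator in $\{0,1\}$) by misreporting $D_{ik}$. Summing over timesteps then yields the global claim, so first I would reduce everything to a single fixed timestep $k$.

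Fix $k$ and write $n_p$ for the number of agents disapproving project $p$ at $k$, and $n_p^{-i}$ for the same count excluding agent $i$; note that $n_p^{-i}$ is unaffected by $i$'s report and that the reported count under a deviation equals $n_p^{-i}+\mathbf 1[p\in D_{ik}']$. If under truthful reporting the chosen project satisfies $o_k\notin D_{ik}$, then agent $i$ already pays $0$ and cannot do better, so assume $o_k\in D_{ik}$ and let $c:=n_{o_k}$ be the (minimum) disapproval count achieved by the truthful winner. The central observation I would establish is a pair of bounds valid under \emph{any} deviation $D_{ik}'$: every project $p\notin D_{ik}$ has reported count at least $n_p^{-i}\ge c$ (agent $i$ can only raise such a project's count, and its truthful count $n_p^{-i}$ already weakly exceeds the truthful minimum $c$), whereas every project $p\in D_{ik}$ has reported count at least $n_p^{-i}=n_p-1\ge c-1$. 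Hence misreporting can pull a disapproved project's count down only as far as $c-1$, but can never bring a non-disapproved project below $c$.

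From here I would split into two cases on the new minimum reported count. If that minimum equals $c-1$, it is attained only by projects in $D_{ik}$ (non-disapproved projects sit at $\ge c$), so whatever is selected lies in $D_{ik}$ and agent $i$ still pays $1$. The delicate case---and the step I expect to be the main obstacle---is when the new minimum equals $c$ and some non-disapproved project ties at count $c$, where one must rule out that lexicographic tie-breaking now hands the outcome to a project outside $D_{ik}$. To close this, I would first note that $o_k$ must still have reported count exactly $c$ (if it were dropped from the report its count would fall to $c-1$, returning us to the previous case), so $o_k$ remains a tie-break candidate; and any non-disapproved project with reported count $c$ necessarily has truthful count $c$, hence also attains the truthful minimum. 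Since $o_k$ was the truthful winner, it is lexicographically smaller than every \emph{other} truthful-count-$c$ project, so the lexicographically smallest reported-count-$c$ project is no larger than $o_k$. Finally, any such project lexicographically below $o_k$ that avoided $D_{ik}$ would be a truthful-count-$c$ project sitting below the truthful winner---a contradiction---so the selected project lies in $D_{ik}$ and agent $i$'s per-timestep disutility remains $1$. By separability across timesteps, this establishes strategyproofness.
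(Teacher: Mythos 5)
Your proof is correct and follows essentially the same route as the paper's: reduce to a single timestep via the separability of \textsc{Greedy Min-Sum}, then use the fact that a misreport can only raise the reported disapproval counts of projects outside $D_{ik}$ and only lower those of projects inside $D_{ik}$, together with consistency of lexicographic tie-breaking, to conclude that the winner can never move from a disapproved project to a non-disapproved one. The paper organizes this as a direct pairwise comparison (every $c' \notin D_{ik}$ still loses, possibly by tie-break, to the truthful winner), whereas you case-split on the value of the new minimum count, but the substance is identical.
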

A natural follow-up question is whether the \minsum{} objective is compatible with a stronger version of strategyproofness.
A well-known generalization of strategyproofness is \emph{group strategyproofness} (GSP). Intuitively, GSP states that no group of agents should be able to misreport their preferences so as to benefit every member of the group. 
Unfortunately, we show that the \minsum{} objective is incompatible with GSP.

We start by presenting the formal definition of group strategyproofness.
Recall that agent $i$'s disutility function $d_i$ is computed with respect to her (true) disapproval vector $\mathbf{D}_i$.
\begin{definition}[Group-strategyproofness]
    For each $S \subseteq N$, let $\mathcal{D}_{-S}$ denote the list of all disapproval vectors except those of agents in $S$.
    A mechanism $\mathcal{M}$ is \emph{group-strategyproof} (GSP) if for each instance $(N,P,T,(\mathbf{D}_i)_{i \in N})$, each subset of agents $S \subseteq N$, each agent $i \in S$ 
    and each list of disapproval vectors $(\mathbf{D}'_j)_{j \in S}$, it holds that $d_{i}(\mathcal{M}(\mathcal{D}_{-S}, (\mathbf{D}_j)_{j \in S})) \leq d_{i}(\mathcal{M}(\mathcal{D}_{-S},(\mathbf{D}'_j)_{j \in S}))$.
\end{definition}

Note that GSP reduces to SP if we only consider singleton groups.
Then, our negative result is as follows.

\begin{proposition} \label{prop:minsum_gsp}
    Let $\mechanism$ be a mechanism that always returns a \minsum{} outcome. Then $\mechanism$ is not group-strategyproof.
\end{proposition}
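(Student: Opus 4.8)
The plan is to construct an explicit instance where the \minsum{} outcome (under any tie-breaking) can be improved by a coordinated group deviation, with every member of the group strictly benefiting. The natural structure to exploit is the ``indifference'' inherent in \minsum{}: when two projects have the same number of disapprovals at a timestep, the mechanism is free to pick either, and a group of agents can misreport to break ties in a way that helps all of them simultaneously. Since a single agent cannot achieve this (by Theorem~\ref{thm:minsum_sp}), the example must genuinely rely on coordination between at least two agents across at least two timesteps.

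Concretely, I would look for a small instance with two projects $P=\{p_1,p_2\}$ and two timesteps, involving a pair of agents $i,j$ together with enough ``filler'' agents to control the disapproval counts. The idea is to arrange the truthful profile so that at each timestep the two projects are tied in total disapprovals, forcing the mechanism to commit to some outcome $\mathbf o$ that gives each of $i$ and $j$ a disutility of (say) $1$. Then I would design misreports $\mathbf D_i', \mathbf D_j'$ that shift the disapproval counts so that the unique \minsum{} outcome under the manipulated profile is an outcome $\mathbf o'$ under which both $i$ and $j$ incur disutility $0$, computed with respect to their \emph{true} disapproval vectors. The filler agents ensure that the total-disutility ordering over outcomes is whatever I need it to be, while keeping the counts symmetric enough that the truthful case is genuinely tied.

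The key steps, in order, are: (1) fix the truthful profile and verify that a \minsum{} mechanism can (under some admissible tie-breaking) return an outcome bad for both $i$ and $j$ — since the proposition quantifies over mechanisms that ``always return a \minsum{} outcome,'' I must exhibit a \minsum{} outcome that harms the group, and show the manipulation forces a strictly better one; (2) specify the misreports and recompute the per-timestep disapproval counts; (3) verify that under the manipulated profile the \minsum{}-optimal outcome is unique (or at least that every \minsum{} outcome is good for the group), so the mechanism is compelled to choose it regardless of tie-breaking; (4) evaluate $i$'s and $j$'s true disutilities under $\mathbf o$ versus $\mathbf o'$ and confirm a strict decrease for both.

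The main obstacle I anticipate is handling the quantifier structure carefully: the proposition asserts that \emph{any} \minsum{} mechanism fails GSP, so I cannot merely show that \emph{some} \minsum{} outcome is bad — I need the truthful \minsum{} outcome that the mechanism actually returns to be improvable. The cleanest way around this is to engineer the instance so that there is essentially a symmetry between $i$ and $j$ in the truthful profile, guaranteeing that every \minsum{} outcome gives at least one of them positive disutility, and then make the manipulated profile's optimum strictly dominate for both. Getting the filler agents and tie-breaking interactions exactly right, so that the manipulation forces a unique favorable optimum rather than reintroducing a tie the mechanism could resolve adversarially, is the delicate part of the argument.
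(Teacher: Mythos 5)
Your high-level plan---make the truthful profile tied at every timestep, then have the pair $\{i,j\}$ break the tie---runs into a fatal problem with exactly the quantifier issue you flag at the end, and your proposed fix does not resolve it. First, the fix targets the wrong condition: guaranteeing that every truthful \minsum{} outcome gives \emph{at least one} of $i,j$ positive disutility is not enough, because a GSP violation requires \emph{every} member of the deviating group to strictly improve; you need every truthful \minsum{} outcome to be strictly worse, for \emph{both} $i$ and $j$, than every \minsum{} outcome of the manipulated profile. Second, with your specific construction this stronger condition is unattainable: if the two projects are tied in disapprovals at every timestep, then \emph{every} outcome is a \minsum{} outcome under truth---in particular, the very outcome $\mathbf{o}'$ that the pair hopes to force is already an admissible truthful output, so an adversarial mechanism simply returns $\mathbf{o}'$ truthfully and neither agent strictly improves. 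Ties favor the adversary here, not the manipulators.

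There is also a structural reason a \emph{pair} of agents cannot work, which explains why the paper's proof uses a group of three. For the deviation to help $i$ against every mechanism, there must be some timestep $t$ at which \emph{all} truthful per-timestep minimizers lie in $D_{it}$ (otherwise an adversarial mechanism could pick, at every timestep, a minimizer that $i$ does not disapprove, giving $i$ truthful disutility $0$); let $c$ be the minimum disapproval count at $t$. Any target project $q\notin D_{it}\cup D_{jt}$ has truthful count at least $c+1$, and since neither $i$ nor $j$ truly disapproves $q$, no misreport of theirs can lower it. Meanwhile, a truthful minimizer $r\in D_{it}$ already contains $i$'s disapproval, so only $j$ can inflate its count, to at most $c+1$. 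Hence the pair can at best create a tie between $r$ and $q$, which the mechanism may break toward $r$; they can never make a project that is good for both of them the unique minimizer at $t$. The paper's construction avoids both problems at once: five agents, three timesteps, deviating group $S=\{1,2,3\}$, where agent $i\in S$ disapproves $p_1$ only at timestep $i$ and agents $4,5$ disapprove $p_2$ at every timestep. The truthful optimum $(p_1,p_1,p_1)$ is \emph{unique} (counts $1$ vs.\ $2$ at each timestep) and gives each member of $S$ disutility $1$; when all of $S$ report disapproving $p_1$ everywhere, the winner's count at each timestep is pushed from $1$ to $3$---possible precisely because \emph{two} group members do not truly disapprove it---so $(p_2,p_2,p_2)$ becomes the unique optimum and each member's true disutility drops to $0$. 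Uniqueness of the optimum on both sides is what defeats every tie-breaking rule; you should restructure your construction along these lines rather than relying on ties.
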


Notably, while we only defined GSP (and SP) for deterministic mechanisms, the above negative result also applies to an analogous definition of GSP for \emph{randomized} mechanisms.
This is because the instance constructed in the counterexample has a unique \minsum{} outcome, and any randomized mechanism behaves exactly like a deterministic one.

The above results indicate that the \minsum{} objective is compatible with disincentivizing strategic manipulation by individuals (Theorem~\ref{thm:minsum_sp}), but not by groups (Proposition~\ref{prop:minsum_gsp}).
We further show that, while groups may have opportunities for manipulation, identifying a `suitable manipulation' that strictly benefits every agent within the group can be computationally intractable.
Following the conventions of the literature on voting manipulation \cite{manip-handbook}, we assume the group has knowledge of the reported disapproval vectors of agents outside the group.

\begin{theorem}\label{thm: NP-c, GSP}
    Let $\mathcal{M}$ be a mechanism that always returns a \minsum{} outcome. Given an instance $(N,P,T,(\mathbf{D}_i)_{i \in N})$ and a subset of agents $S \subseteq N$, determining whether there exists disapproval vectors $(\mathbf{D}'_i)_{i \in S}$ such that $d_{i}(\mathcal{M}(\mathcal{D}_{-S},(\mathbf{D}_i)_{i \in S})) > d_{i}(\mathcal{M}(\mathcal{D}_{-S},(\mathbf{D}'_i)_{i \in S}))$ for all $i \in S$ is \emph{NP-complete}.
\end{theorem}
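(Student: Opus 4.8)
The plan is to establish membership in NP and then NP-hardness by a reduction from \textsc{3-Occur-3SAT}, the same source used in Theorem~\ref{thm: NP-hard}. For membership, observe that any \minsum{} mechanism with a fixed (e.g., lexicographic) tie-breaking rule, such as \textsc{Greedy Min-Sum}, runs in polynomial time. A successful manipulation is certified by the reported vectors $(\mathbf{D}'_i)_{i\in S}$, which are of polynomial size; given them one computes the truthful outcome $\mathcal{M}(\mathcal{D}_{-S},(\mathbf{D}_i)_{i\in S})$ and the manipulated outcome $\mathcal{M}(\mathcal{D}_{-S},(\mathbf{D}'_i)_{i\in S})$, and checks that $d_i$ (evaluated against the \emph{true} $\mathbf{D}_i$) strictly decreases for every $i\in S$.

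For hardness, given a formula with clauses $C_1,\dots,C_s$ over variables $x_1,\dots,x_r$ in which each variable occurs in at most three clauses, I would introduce one timestep $t_j$ per variable together with three active projects: $a_j$ (encoding $x_j=\texttt{true}$), $b_j$ (encoding $x_j=\texttt{false}$), and a neutral project $n_j$. The manipulators $S$ are the $s$ clause-agents, while a block of dummy (non-manipulator) agents fixes the base counts. Clause-agent $s_i$ truthfully disapproves, for each literal of $C_i$, the neutral project $n_j$ together with the \emph{falsifying} project of that literal ($b_j$ for a positive occurrence, $a_j$ for a negative one) at the corresponding timestep. The dummies are arranged so that at $t_j$ the projects $a_j,b_j$ carry a constant base of $M_j=4$ disapprovals, $n_j$ carries none, every foreign project (from another variable) carries more than $s$, and, padding with a few trivial clauses if needed, $s>M_j$. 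Since $\deg(x_j)\le 3<M_j$, under truthful reporting $n_j$ (with count $\deg(x_j)$) is the unique \minsum{} winner at every timestep, while foreign projects are never competitive.

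The correctness argument then rests on three points: (i) under truthful reporting $n_j$ is chosen everywhere, so each clause-agent has disutility exactly $3$; (ii) the manipulators can force an arbitrary assignment---to realize $x_j=\texttt{true}$ they report disapprovals of $n_j$ and $b_j$ but not $a_j$, raising the count of $n_j$ to $s>M_j$ while leaving $a_j$ at $M_j$ as the unique minimum (symmetrically for $b_j$), with per-timestep reports chosen independently; and (iii) under an assignment $\alpha$ the disutility of $s_i$ equals the number of literals of $C_i$ falsified by $\alpha$, hence is at most $2$---a strict improvement over $3$---exactly when $C_i$ is satisfied. As selecting $n_j$ falsifies every literal on $x_j$ of both polarities, no deviation through $n_j$ can help, so for any report profile the induced outcome improves every clause-agent iff the induced assignment satisfies all clauses, i.e.\ iff the formula is satisfiable. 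Because all the forced winners are \emph{unique} minima, the construction is independent of tie-breaking, and the hardness therefore holds for every \minsum{} mechanism simultaneously.

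I expect the main obstacle to be the inherent tension between two conflicting demands: the truthful \minsum{} outcome must hand each manipulator a fixed, strictly positive baseline disutility (so that improvement is even possible), yet the manipulators must be able to steer the outcome to an arbitrary assignment. This is delicate because \minsum{} systematically \emph{avoids} the most-disapproved project, so the project that every clause-agent dislikes cannot by itself be the truthful winner. The neutral project $n_j$ with zero base count is what resolves this: it is cheap enough to win truthfully, yet the manipulators can inflate its count past the fixed base $M_j$ of $a_j,b_j$ and thereby release whichever of the two they wish. The secondary point requiring care is precisely verifying that no deviation selecting some $n_j$ can benefit a manipulator, so that the ``all strictly benefit'' condition encodes satisfiability exactly rather than approximately.
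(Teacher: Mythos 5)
Your proposal is correct and takes essentially the same route as the paper's proof: a reduction from \textsc{3-Occur-3SAT} in which the manipulating coalition consists of the clause agents, dummy agents inflate the disapproval counts of the two assignment projects so that a ``blocker'' project (your $n_j$, the paper's $p_3$) is the unique truthful \minsum{} winner at each variable timestep, and the coalition can misreport to inflate the blocker's count and force an arbitrary assignment, with every coalition member strictly improving if and only if the formula is satisfiable. The only cosmetic differences are that you use a distinct project triple per variable (guarding against ``foreign'' projects via large base counts) where the paper reuses three global projects, and that your explicit unique-minimum bookkeeping makes the independence from tie-breaking cleaner, whereas the paper handles length-2 clauses via an extra timestep and defers the final step to its Theorem~\ref{thm: NP-hard}.
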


\subsection{Mechanisms for \minmax{}}
Next, we turn to the same questions for the \minmax{} objective. However, in contrast, we show that \minmax{} is fundamentally incompatible with strategyproofness. Intuitively, agents are incentivized to appear `worse off' by misreporting disapproval for projects they do not actually disapprove of.
\begin{proposition} \label{prop:minmax_gsp}
    Let $\mechanism$ be a mechanism that always returns a \minmax{} outcome. Then $\mechanism$ is not strategyproof.
\end{proposition}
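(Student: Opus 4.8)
The plan is to exhibit a single fixed instance together with a beneficial misreport that defeats \emph{every} mechanism returning a \minmax{} outcome, so that the argument is insensitive to how ties among optimal outcomes are broken. Concretely, I would aim for a construction in which (i) in \emph{all} truthful \minmax{} outcomes some fixed agent---say agent $1$---has disutility at least $1$, and (ii) in \emph{all} \minmax{} outcomes of the profile obtained after agent $1$'s misreport, agent $1$'s \emph{true} disutility is $0$. Together these give $d_1(\mechanism(\text{truthful})) \ge 1 > 0 = d_1(\mechanism(\text{manipulated}))$ for any such $\mechanism$, contradicting strategyproofness.

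I would use a minimal instance with $N=\{1,2\}$, $P=\{p_1,p_2\}$ and three timesteps, engineered so that the agents' disutilities depend only on the project chosen at the first timestep: agent $1$ truly disapproves only $p_1$ at timestep $1$, while agent $2$ disapproves $p_2$ at timestep $1$ and is forced to incur one unit of disutility at a second, `busy' timestep (e.g.\ by disapproving every project there), with the last timestep acting as padding. The first step is then to compute the truthful \minmax{} value and to check that every optimal outcome selects $p_1$ at timestep $1$ (so agent $1$ necessarily pays $1$), because selecting $p_2$ there pushes agent $2$'s disutility strictly above the optimum. The crucial second step is to let agent $1$ \emph{appear worse off} by additionally reporting disapproval, at the busy and padding timesteps, of projects she does not truly disapprove; this raises her \emph{reported} disutility uniformly across outcomes, so that the outcomes selecting $p_1$ at timestep $1$ now strictly exceed the \minmax{} value, leaving exactly the outcomes that select $p_2$ at timestep $1$ as optimal. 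The final step is to observe that agent $1$'s \emph{true} disutility in every such outcome is $0$, since her manipulation only fabricated disapprovals at timesteps she is truly indifferent about.

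The main obstacle I anticipate is the tension between penalizing the undesired outcomes and protecting the target outcome: any fabricated disapproval that increases the reported cost of the `$p_1$ at timestep $1$' outcomes risks also increasing the reported cost of the target `$p_2$ at timestep $1$' outcomes, which would merely create ties (and hence leave room for an adversarial tie-break to keep agent $1$'s disutility at $1$). I would resolve this by routing all of the inflation through the two timesteps that agent $1$ is genuinely indifferent about: there the fabricated disapprovals raise her reported disutility by a fixed amount on \emph{every} outcome, so they cannot distinguish the target from its competitors, while the genuine conflict at timestep $1$ still separates them---and, because three timesteps give enough `room', the gap suffices to make the $p_2$-at-timestep-$1$ outcomes strictly (not just weakly) optimal. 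A secondary point to verify is that the padding timesteps do not introduce spurious optimal outcomes; this is immediate once all disutilities are arranged to depend only on the first coordinate.
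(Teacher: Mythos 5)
Your proposal is correct, and I verified the arithmetic: truthfully, agent $1$'s disutility is $[o_1=p_1]$ and agent $2$'s is $1+[o_1=p_2]$, so the \minmax{} value is $1$ and \emph{every} optimal outcome sets $o_1=p_1$, costing agent $1$ exactly $1$; after agent $1$ fabricates disapprovals of both projects at the busy and padding timesteps, her reported disutility becomes $[o_1=p_1]+2$ while agent $2$'s stays $1+[o_1=p_2]$, so outcomes with $o_1=p_1$ have reported maximum $3$ and outcomes with $o_1=p_2$ have reported maximum $2$, forcing every \minmax{} mechanism to choose $o_1=p_2$, where agent $1$'s true disutility is $0$. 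Your identification of the key obstacle is also on point: inflating by only one unit (busy timestep alone) yields a tie ($2$ versus $2$) that adversarial tie-breaking could exploit, so the second padding timestep is genuinely needed.

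Your route differs from the paper's in a way worth noting. The paper uses a symmetric instance with three agents, three projects, and two timesteps ($D_{i1}=\{p_2,p_3\}$ for all $i$, $D_{i2}=\{p_i\}$), where the truthful optima are $(p_1,p_j)$ for $j\in\{1,2,3\}$, each hurting a different agent; it then argues ``without loss of generality'' about which optimum the mechanism selects, and has the resulting hurt agent fabricate disapprovals so that the outcomes hurting her become reportedly suboptimal. Both proofs share the same core mechanism-design insight (an agent gains by appearing worse off), but your construction is asymmetric by design: all truthful optima hurt the \emph{same} agent, so you need no symmetry/WLOG step, and you pin down the \emph{entire} optimal set both before and after the manipulation, which makes the argument uniformly robust to arbitrary tie-breaking at both stages. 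The paper's construction buys a smaller time horizon (two timesteps rather than three) and illustrates that the phenomenon arises even under full symmetry, at the cost of the WLOG bookkeeping and of having to control the post-manipulation tie-breaking explicitly; yours buys fewer agents and projects ($2$ and $2$ rather than $3$ and $3$) and a cleaner quantifier structure, at the cost of one extra timestep.
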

Again, while we only defined SP for deterministic mechanisms, the above negative result also applies to an analogous definition of SP for randomized mechanisms.

Despite the stronger negative result above, we show that for any mechanism returning a \minmax{} outcome with ties broken lexicographically, it may be computationally intractable for any agent to find a manipulation that decreases their disutility. Our proof places this problem at the second level of the polynomial hierarchy; this is because computing a \minmax{} outcome is already a hard problem (see Theorem~\ref{thm: NP-hard}).

\begin{theorem} \label{thm_sigma2p}
    Let $\mathcal{M}_\text{lex}$ be a mechanism that returns a \minmax{} outcome with lexicographical tiebreaking.
    Given some instance $(N, P, T, (\mathbf{D}_i)_{i \in N})$ and an agent $i \in N$, determining whether there exists a disapproval vector $\mathbf{D}'_i$ such that $d_i(\mechanism_\text{lex}(\mathcal{D}_{-i},\mathbf{D}_i)) > d_i(\mechanism_\text{lex}(\mathcal{D}_{-i},\mathbf{D}'_i))$ is $\Sigma^P_2$\emph{-complete}.
\end{theorem}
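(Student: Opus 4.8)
The plan is to establish the two directions of completeness separately, with membership being routine and hardness carrying the bulk of the work.

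For membership in $\Sigma^P_2$, I would exploit the natural $\exists\forall$ structure of the problem. The outer existential guess consists of a candidate manipulation $\mathbf{D}'_i$ together with two candidate outcomes: the outcome $\mathbf{o}^*$ that $\mechanism_\text{lex}$ returns on the truthful profile $(\mathcal{D}_{-i}, \mathbf{D}_i)$, and the outcome $\mathbf{o}'$ it returns on the manipulated profile $(\mathcal{D}_{-i}, \mathbf{D}'_i)$; all three objects have polynomial size. The inner universal (coNP) phase then verifies that each guessed outcome is genuinely the lexicographically-least $\minmax$ outcome for its profile: this amounts to checking, over all outcomes $\mathbf{o}''$, that none attains a strictly smaller maximum disutility and that none with the same maximum disutility is lexicographically smaller---a polynomially-checkable predicate under a single universal quantifier over the (exponentially many) outcomes. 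Finally, the strict inequality $d_i(\mathbf{o}') < d_i(\mathbf{o}^*)$, evaluated against $i$'s true vector $\mathbf{D}_i$, is checkable in polynomial time. This places the problem in $\Sigma^P_2$, and the second layer is genuinely needed because, by Theorem~\ref{thm: NP-hard}, even recognizing a single $\minmax$ outcome is already hard.

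For $\Sigma^P_2$-hardness, I would reduce from $\exists\forall$-SAT: given $\Phi = \exists x_1\cdots x_p\,\forall y_1\cdots y_q\,\psi(x,y)$, construct an instance and a designated agent $i$ such that $i$ has a profitable manipulation iff $\Phi$ is true. The guiding principle is to let $i$'s misreport play the role of the existential block $x$, and to let the $\minmax$ winner-determination (whose hardness, via Theorem~\ref{thm: NP-hard}, supplies a coNP layer) together with lexicographic tie-breaking play the role of the universal block $y$ and the clause checks on $\psi$. Concretely, $i$'s truthful report yields a fixed baseline true-disutility $d^*$ for $i$, and the timesteps and auxiliary agents/projects are arranged so that, after $i$ reports a vector $\mathbf{D}'_i$ encoding an assignment to $x$, the lex-$\minmax$ outcome drives $i$'s true disutility below $d^*$ precisely when $\psi(x,\cdot)$ is a tautology in $y$. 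The lever making this possible is the decoupling between the preferences the mechanism optimizes ($\mathbf{D}'_i$) and the preferences $i$'s benefit is measured against ($\mathbf{D}_i$): by misreporting, $i$ can steer which outcome is selected without that outcome's cost to the \emph{true} $i$ entering the minimized maximum. I would use this so that, for a fixed $x$, a falsifying $y$ exposes a cheaper (smaller-maximum, or lexicographically-preferred) outcome that the mechanism is obliged to pick and that is bad for the true $i$, whereas if no $y$ falsifies $\psi$, no such escape exists and the forced outcome is good for the true $i$; the lexicographic tie-breaking pins the mechanism's output to a unique, predictable outcome in each case, converting ``no falsifying $y$ exists'' into an exact statement about $i$'s realized disutility.

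The main obstacle will be the gadget design: simultaneously (i) making an arbitrary misreport $\mathbf{D}'_i$ correspond to a well-formed $x$-assignment while neutralizing ``illegal'' misreports so they can never help, (ii) ensuring the minmax adversary ranges over exactly the $y$-assignments and evaluates $\psi$ correctly, and (iii) calibrating all disutility thresholds and the lexicographic order so that the single selected outcome reflects $\forall y\,\psi(x,y)$ and beats the baseline $d^*$ iff $\Phi$ holds. I expect this calibration---keeping the minmax value, the tie-break order, and $i$'s true-versus-reported disutilities mutually consistent across both the truthful and every manipulated profile---to be the most delicate part of the proof.
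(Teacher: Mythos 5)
Your membership argument is sound and in fact more careful than the paper's own (the paper disposes of containment in one informal sentence, whereas your guess-two-outcomes-then-universally-verify-lex-minimality scheme is a genuine $\exists\forall$ certificate structure). Your hardness strategy also matches the paper's actual route: the paper reduces from the 2-alternating QBF problem $B^{\mathit{CNF}}_2$ (``$\exists X$ such that no $Y$ satisfies $F$'', interchangeable with your $\exists x\,\forall y\,\psi$), lets the manipulator's misreport encode the existential block, lets the \minmax{} winner determination over a block of $Y$-timesteps play the universal quantifier, and uses lexicographic tie-breaking plus the decoupling between reported and true disutility to convert ``no falsifying $Y$ exists'' into a strict drop in the manipulator's true disutility. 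Every structural lever you name is the one the paper pulls.

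The gap is that you never build the reduction, and the construction is where the entire difficulty of this theorem lives. The three ``obstacles'' you list at the end --- forcing arbitrary misreports to encode well-formed $X$-assignments, making the \minmax{} adversary range exactly over $Y$-assignments, and calibrating thresholds and tie-breaking consistently across the truthful and all manipulated profiles --- are not residual details; they are the proof. The paper resolves them with a specific gadget: $m$ clause agents, a manipulator agent $m+1$ given highest lex priority, an auxiliary agent $m+2$, and five timestep groups $T_1,T_2,T_3,T_X,T_Y$ sized so that (a) under truthful reporting the optimal \minmax{} value is $|Y|+1$ and the manipulator's disutility is exactly $|Y|+1$; (b) a misreport that disapproves one extra project per $T_X$-timestep (and one in $T_3$) pins the chosen projects to an $X$-assignment $\mathcal{A}$; (c) if some $Y$-assignment satisfies $F$, an outcome of value $|Y|+1$ still exists and the manipulation gains nothing, while if none does, the optimum jumps to $|Y|+2$ and lex tie-breaking then hands the manipulator an outcome of true disutility only $|Y|$; and (d) any ``illegal'' misreport is neutralized because disapproving anything else makes the manipulator's own disutility at least $|Y|+1$ again. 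Without exhibiting such a calibrated instance and proving both directions of the equivalence (in particular the converse: that \emph{every} profitable misreport induces a valid $X$-assignment witnessing the QBF), the hardness direction remains a plan rather than a proof, so the proposal as written does not establish the theorem.
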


\section{Online Setting} \label{sec:online}
In practice, multi-year plans often face discontinuities due to political shifts or other unforeseen changes, leading to potential failure in execution mid-way. Consequently, ensuring fairness for voters only at the conclusion of X years may be inadequate; voters might instead expect satisfaction at every consecutive timestep. 
Further, project plans (and hence availability/feasibility) themselves may evolve over time, making it difficult, if not impossible, to predict the project availability in advance.
In both of these scenarios, it is natural to consider an \emph{online} setting where agents' preferences for future projects are unknown, and decisions are made without access to future information. We utilize \emph{competitive analysis} for online algorithms to measure the impact of this lack of future information on our ability to achieve optimal welfare.

Let $\mathcal{B}$ be an {\em online algorithm} for our setting, i.e., an algorithm that for each $k\in T$
selects $o_k$ based on $(D_{it})_{i\in N, t\in[k]}$. Let $\mathcal{B}(\mathcal{I})$ denote the output of $\mathcal B$ on instance $\mathcal I$.

\begin{definition}[Competitive Ratio]
    For an online algorithm $\mathcal{B}$ and objective $W \in \{\minsum{}, \minmax{}\}$, the \emph{competitive ratio (CR)} of $\mathcal{B}$ with respect to $W$ is the supremum over all instances $\mathcal{I}$ of the ratio between the $W$-value of $\mathcal {B(I)}$ and the minimum $W$-value of an outcome for $\mathcal{I}$:
    \begin{equation*}
        \text{CR}_W(\mathcal{B}) = \sup_{\mathcal{I}} \frac{ W_\text{op}(\mathcal{B(I)})}{\min_{\mathbf{o} \in \Pi(\mathcal{I})} W_\text{op}(\mathbf{o})}.
    \end{equation*}
\end{definition}

Note that \textsc{Greedy Min-Sum} is an online algorithm, and it outputs a
\minsum{} allocation. Hence,  CR$_\minsum(\mathcal{\textsc{Greedy Min-Sum}}) = 1$.

However, for the \minmax{} objective, the lack of information about the future can significantly impact the \minmax{} value. For the remainder of this section, we focus on the competitive ratio with respect to \minmax{}. 

Similarly to the greedy algorithm for \minsum{}, 
we consider the greedy egalitarian algorithm, which, at each timestep $k$, picks a project so as to minimize the maximum disutility over the first $k$ timesteps.
We refer to this algorithm as \textsc{Greedy Min-Max}. 
Unfortunately, it turns out that for both \textsc{Greedy Min-Max} and \textsc{Greedy Min-Max} the competitive ratio is lower-bounded by $\Omega(n)$.

\begin{proposition} \label{prop:online_greedy_competitive}
    \textsc{Greedy Min-Sum} and \textsc{Greedy Min-Max} both have a competitive ratio of $\Omega(n)$ with respect to \minmax{}.
\end{proposition}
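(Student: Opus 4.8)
The plan is to establish the $\Omega(n)$ lower bound by exhibiting, \emph{separately} for each of the two algorithms, a family of instances on which its \minmax{}-value exceeds the optimal \minmax{}-value by a factor of $\Omega(n)$; since the competitive ratio is a supremum over instances, one bad family per algorithm suffices, and we need not fool both with a single instance. Both algorithms are deterministic and break ties by a fixed (lexicographic) rule, so I would name the projects so that the tie-breaking works against the algorithm.

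For \textsc{Greedy Min-Sum} I would take $n$ agents, $\ell=n$ timesteps, and $n$ projects $g_1,\dots,g_n$, where at \emph{every} timestep $g_j$ is disapproved by agent $j$ alone. Then at each timestep every project has exactly one disapproval, so the minimum is attained by all projects and lexicographic tie-breaking makes \textsc{Greedy Min-Sum} pick $g_1$ at every timestep; hence agent $1$ accumulates disutility $n$ while everyone else stays at $0$, giving \minmax{}-value $n$. Selecting $g_k$ at timestep $k$ instead hurts each agent exactly once, so the optimal \minmax{}-value is $1$, and the ratio is $n=\Omega(n)$. The verification is routine; the only thing to check is that no project ever has zero disapprovals, so greedy is never diverted away from $g_1$.

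For \textsc{Greedy Min-Max} I would use only two projects $r$ and $s$ (with $r$ lexicographically first), $n$ agents, and $\ell=n-1$ timesteps, letting $s$ be disapproved at timestep $k$ exactly by agent $k$ and $r$ exactly by the ``tail'' $\{k+1,\dots,n\}$. The crux — and the main obstacle — is that the instance is fixed but greedy and the optimum traverse it through different states. I would prove by induction that after $k-1$ greedy steps each tail agent sits at the current maximum $k-1$; then at timestep $k$ both $r$ (hitting tail agents, all at the maximum) and $s$ (hitting agent $k$, also at the maximum) raise the maximum to exactly $k$, so they tie and lexicographic tie-breaking forces greedy onto $r$, pushing the whole tail up and driving agent $n$ to disutility $n-1$. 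The optimum instead plays $s$ at every timestep: since it never loads the tail, agent $k$ is still at $0$ when $s$ is chosen, so its maximum never exceeds $1$, yielding a ratio of $n-1=\Omega(n)$. The delicate point in this analysis is to argue that greedy cannot ``escape'' to a smaller maximum: because the current maximum is $k-1$ and \emph{both} available projects hit an agent already at $k-1$, no choice can keep the maximum below $k$, so greedy is genuinely trapped on the climbing trajectory, and the gap with the optimum arises solely because the optimum, following a different history, reaches each timestep in a state where the move $s$ is actually cheap.
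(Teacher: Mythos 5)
Your proof is correct, and it follows the same high-level strategy as the paper (exhibit a separate bad family of instances for each algorithm), but your constructions are genuinely different, and the difference matters. For \textsc{Greedy Min-Sum}, the paper uses two projects with $n=2k+1$ agents, where at each of $k$ timesteps one fixed agent disapproves $p_1$ and a \emph{fresh pair} of agents disapproves $p_2$: greedy then \emph{strictly} prefers $p_1$ (one disapproval versus two), so no tie-breaking is ever invoked, whereas your $n$-project instance makes every project tie at every timestep and leans entirely on lexicographic tie-breaking. For \textsc{Greedy Min-Max}, the paper uses $2n$ timesteps arranged in $n$ identical pairs (in block $i$, agent $i$ disapproves $p_1$ and everyone else disapproves $p_2$); within each block, whichever way a tie is broken at the first timestep, greedy is forced to pick the \emph{other} project at the second, so every agent's disutility climbs by one per block and ends at $n$ while the optimum is $2$ --- the bound holds for \emph{any} tie-breaking rule. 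Your ``tail'' construction instead keeps both choices tied at every timestep and needs the tie to break toward $r$ each time; if ties broke the other way (or by a secondary criterion such as minimizing the sum), greedy on your instance would be optimal. Since the paper only specifies lexicographic tie-breaking for \textsc{Greedy Min-Sum} (and you are free to name projects adversarially), your argument is valid for the algorithms as you interpret them, and it is arguably more economical ($n-1$ timesteps, ratio $n$ versus the paper's $n/2$). What the paper's constructions buy is robustness: the $\Omega(n)$ lower bound holds for every deterministic tie-breaking rule, not just the one you fixed, which is the stronger and more natural statement for a named greedy algorithm whose tie-breaking is not pinned down.
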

Next, we present a lower bound for \emph{all} online algorithms. 
We consider a weak, non-adaptive adversary that does not have access to the randomized results of the algorithm and show that even against such an adversary, any online algorithm has a competitive ratio of at least $\Omega(\log n)$.

\begin{proposition} \label{prop:comp_ratio_lowerbound}
    Against a non-adaptive adversary, any online algorithm (deterministic or randomized) has a competitive ratio of $\Omega(\log n)$ with respect to $\minmax{}$.
\end{proposition}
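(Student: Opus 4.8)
The plan is to apply Yao's minimax principle: I will construct a distribution $\mathcal{D}$ over instances, all of which have $\minmax{}$ optimum exactly $1$, and show that \emph{every} deterministic online algorithm incurs expected $\minmax{}$ value at least $\tfrac12\lfloor\log_2 n\rfloor$ on $\mathcal{D}$. Since a randomized algorithm facing a non-adaptive adversary is a distribution over deterministic online algorithms, and $\mathcal{D}$ is fixed obliviously to the algorithm's coins, this yields a competitive ratio of $\Omega(\log n)$ for all (deterministic or randomized) online algorithms.

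The hard instance is a \emph{nested random halving} over $\ell=\lfloor\log_2 n\rfloor$ timesteps with two projects $P=\{0,1\}$. Assuming $n=2^\ell$, I draw a uniformly random chain $N=N_0\supsetneq N_1\supsetneq\cdots\supsetneq N_\ell$ in which each $N_k$ is a uniformly random half of $N_{k-1}$, so $|N_k|=n/2^k$ and $N_\ell=\{a^\ast\}$ is a single random ``surviving'' agent. At timestep $k$, only the agents in $N_{k-1}$ are active: the survivors $N_k$ and the dropped agents $N_{k-1}\setminus N_k$ each disapprove exactly one of the two projects, while inactive agents disapprove neither. Crucially, an independent fair coin decides, at each timestep, which of the two projects is disapproved by $N_k$ (as opposed to by $N_{k-1}\setminus N_k$); this randomized labeling is what hides the future from the online algorithm.

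First I would verify that $\mathrm{OPT}=1$ on the entire support: the offline outcome that at each timestep $k$ selects the project disapproved by the dropped set $N_{k-1}\setminus N_k$ hits each agent at most once, since the sets $\{N_{k-1}\setminus N_k\}_{k\in[\ell]}$ partition $N\setminus\{a^\ast\}$ while $a^\ast$ is never hit. Next, for the online side, I would fix an arbitrary deterministic online algorithm $\mathcal{B}$ and track the disutility of $a^\ast$. The key step is to show that, conditioned on the history $\mathcal{H}_k$ available when $o_k$ is chosen (all disapprovals through timestep $k$), the survivor set $N_k$ is equally likely to be either of the two active groups revealed at timestep $k$. This holds because which group survives is determined only by the fair labeling coin together with the still-unrevealed future halvings: the timestep-$k$ disapprovals expose the two groups as \emph{sets}, but which of them survives is revealed only at timestep $k+1$, when exactly its members remain active. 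Since $a^\ast\in N_k$, it follows that $a^\ast$ lies in whichever group $\mathcal{B}$ hits with probability exactly $\tfrac12$, independently of the history-measurable choice $o_k$. Summing over timesteps gives $\mathbb{E}[d_{a^\ast}(\mathcal{B}(\mathcal{I}))]=\ell/2$, hence $\mathbb{E}[\max_{i} d_i(\mathcal{B}(\mathcal{I}))]\ge \ell/2$, and Yao's principle then gives the claimed $\Omega(\log n)$ bound (with the floor/non-power-of-two cases costing only constant factors).

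I expect the main obstacle to be precisely this conditional-independence argument: one must show rigorously that observing the current timestep's disapproval sets gives the algorithm no information about which active group survives, so that seeing $D_{ik}$ before committing to $o_k$ confers no advantage. Formally this reduces to a symmetry/Bayes computation establishing $\Pr[N_k=A\mid \mathcal{H}_k]=\Pr[N_k=B\mid \mathcal{H}_k]=\tfrac12$ for the two observed groups $A,B$, which relies on the independence of the labeling coin from $\mathcal{H}_k$ together with the exchangeability of $N_k$ and its complement under uniform halving. The remaining pieces---the $\mathrm{OPT}=1$ bound and the reduction from randomized to deterministic algorithms via Yao's principle---are routine.
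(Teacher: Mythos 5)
Your proposal is correct and takes essentially the same approach as the paper: the paper's proof uses the identical nested random-halving construction (its $L_i$ and $R_i$ are your $N_k$ and $N_{k-1}\setminus N_k$), the same independent labeling coin at each timestep, the same offline solution that selects the project disapproved by the dropped half (giving $\mathrm{OPT}=1$ since the dropped sets are disjoint), and the same key observation that, conditioned on the history, the algorithm hits the surviving half---and hence the agent in $\bigcap_i L_i$---with probability $\tfrac12$ per timestep, yielding expected cost $k/2 = \Omega(\log n)$. The only cosmetic difference is that you route the randomized case explicitly through Yao's minimax principle, whereas the paper argues directly about the expected cost of an arbitrary (deterministic or randomized) algorithm on the random instance, which amounts to the same averaging argument.
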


\section{Other Fairness Notions} \label{sec:other}
Throughout this paper, we focused on the \minmax{} objective---a fairness criterion that has been extensively studied across many topics in social choice. A natural extension of this work would be to explore other well-established fairness concepts, such as envy-freeness, proportionality or equitability. While envy-based measures are commonly studied in the fair division literature, these do not adapt well to the public goods or chores setting. In the remainder of this section, we discuss proportionality and equitability in more detail.

\subsection{Proportionality}
The first concept we consider is a widely studied notion of fairness in both the fair division of private goods/chores and public goods settings, called \emph{proportionality} \cite{barman2019integralequilibria,branzei2023competitivechores,conitzer2017fairpublic}. 
More specifically, for private chores, proportionality mandates that no agent should receive more than $1/n$ of their pessimal (worst-case) disutility; and for private/public goods, it requires that all agents receive at least $1/n$ of their optimal (best-case) utility. 
While proportionality may not always be achievable in these settings, mild relaxations of the concept are known to always exist. 
However, an analogous definition for public chores proves to be unintuitive, as illustrated by the following example.
\begin{example}
    Consider an instance with $n$ agents, $\ell$ timesteps, and set of projects $P = \{p_1, \dots, p_n\}$. At every timestep, each agent $i \in N$ disapproves all projects in $P \setminus \{p_i\}$. 
    
    Then, every outcome $\mathbf{o}$ is disapproved by $n-1$ agents in each round, so by the pigeonhole principle we have 
    $d_i(\mathbf{o}) \geq \ell\cdot\frac{n-1}{n}$ for some $i\in N$. This is despite the fact that $i$'s disutility from $(p_i, \dots, p_i)$ is $0$. Thus, no outcome is close to being proportional for the standard definition of proportionality. 
\end{example}
Identifying a suitable notion of proportionality for the public chores setting remains an intriguing open problem.

\subsection{Equitability}
Another potentially suitable fairness notion is \emph{equitability}, which mandates that agents' disutilities should be equal \cite{elkind2022temporalslot,freeman2019eq}. 
While it may not always be achievable in practice (this is also the case for many similar fairness properties in the social choice literature), we may be interested in obtaining an equitable outcome when one exists.
Unfortunately, we can show that even determining if an instance admits an equitable outcome is computationally intractable.
\begin{theorem} \label{thm:eq-np-complete}
    Determining if there exists an equitable outcome is \NP-complete.
\end{theorem}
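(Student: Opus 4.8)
The plan is to prove \NP-completeness by establishing membership in \NP{} and then giving a reduction from a known \NP-hard problem. Membership is immediate: given an outcome $\mathbf{o}$, we can compute every $d_i(\mathbf{o})$ in polynomial time and check whether they are all equal, so the problem is in \NP. The interesting direction is hardness, and I would look for a reduction from a problem whose structure naturally forces an ``equalization'' condition across agents.

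The natural candidate is to reduce from the same family of problems used for Theorem~\ref{thm: NP-hard} (a variant of \textsc{3-SAT} such as \textsc{3-Occur-3SAT}), since we already know that even controlling the \emph{maximum} disutility is hard, and equitability is a strictly more demanding requirement that pins down \emph{all} disutilities simultaneously. First I would try to adapt the gadget from Theorem~\ref{thm: NP-hard}: the key idea there is that with $\gamma=2$ and two projects per timestep, selecting a project corresponds to setting a literal, and the \minmax{} bound encodes a feasibility condition on a satisfying assignment. To push this to equitability, I would augment the construction with auxiliary agents or ``padding'' timesteps (using dummy projects that no one disapproves, or agents whose disutility can be tuned) so that a \emph{satisfying} assignment corresponds exactly to an outcome in which every agent incurs the same disutility, while any unsatisfying assignment forces at least one agent strictly above (or below) the common target, breaking equitability.

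The main obstacle I anticipate is the two-sided nature of equitability: unlike \minmax{}, where we only need to prevent any agent from exceeding a threshold, here we must prevent disutilities from being either too high \emph{or} too low relative to one another. This means the reduction cannot simply enforce an upper bound; it must simultaneously guarantee that no agent ends up strictly happier than the others. The trick will be to design the gadget so that the \emph{total} disutility is fixed (so that, by a counting argument, equality of all $d_i$ forces each to equal $\textit{total}/n$), and then arrange that this exact uniform value is attainable if and only if the formula is satisfiable. Concretely, I would ensure that in each timestep exactly a fixed number of agents are ``hit'' regardless of the choice, so the sum is constant, and then use clause gadgets to couple the per-agent counts to the truth assignment.

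Once the gadget is set up, the correctness argument splits into the two standard directions: a satisfying assignment yields an outcome with all disutilities equal (forward direction, constructive), and conversely an equitable outcome must, by the fixed-total argument, assign each agent the target value, which I would then decode into a consistent satisfying assignment (backward direction). I expect the forward direction to be routine bookkeeping, while the backward direction---showing that equality genuinely forces a globally consistent assignment rather than some ``fractional'' or locally-balanced cheat---will require the most care, and is where the padding agents must be chosen so that no alternative way of balancing disutilities exists.
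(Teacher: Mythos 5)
Your \NP{} membership argument and your high-level instinct (fix the total disutility by a counting/pairing argument so that equality pins every agent to a common target) are sound, but the hardness reduction has a genuine gap, and it sits exactly where you locate ``the most care'': the choice of source problem. Reducing from \textsc{3-Occur-3SAT} cannot work with the natural clause-agent encoding, because a satisfiable 3-SAT formula need not admit any assignment in which all clauses are satisfied by the \emph{same} number of true literals; a clause agent's disutility is $3-k_i$, where $k_i$ is the number of its true literals, so exact equality across agents demands a uniformity that satisfiability does not provide---the forward direction of your reduction fails. Your proposed fix, padding, breaks the other direction instead: any per-agent padding timestep that offers a ``harmless'' alternative (a project nobody disapproves) introduces slack of $0$ or $1$ for the targeted agent, and with enough such slack \emph{every} assignment---satisfying or not---can be balanced to a common level (e.g., pad every agent up to disutility $3$), so equitable outcomes always exist and the reduction is unsound. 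Eliminating the slack (making both padding projects disapproved by someone) just re-creates the uniformity problem.

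The paper resolves this by changing the source problem rather than fighting the gadget: it reduces from \textsc{1-in-3-SAT}, where the exact-count condition that equitability naturally enforces is precisely the question being asked. For each clause $c_i$ it introduces two agents, one for $c_i$ and one for its literal-wise negation $neg(c_i)$; each variable timestep then hits exactly one agent of each relevant pair regardless of which project is chosen (your ``fixed total'' idea, implemented without slack), and one extra forced timestep adds $1$ to every negation agent. If an assignment makes $k_i$ literals of $c_i$ true, the pair's disutilities are $3-k_i$ and $k_i+1$, so equitability forces $3-k_i=k_i+1$, i.e., $k_i=1$ for every $i$---exactly 1-in-3 satisfiability---and conversely $k_i=1$ for all $i$ makes every agent's disutility equal to $2$. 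If you want to salvage your plan, the lesson is to not counteract equitability's rigidity with adjustable padding, but to pick a source problem (an exact-count SAT variant) whose answer \emph{is} that rigidity.
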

Moreover, enforcing equitability comes at a very high cost to (both \minsum{} and \minmax{}) welfare. Our definition of {\em price of equitability} is structurally similar to the definition of the price of temporal fairness, 
and leads to the following result.
\begin{theorem} \label{thm:price_of_EQ}
    The price of equitability is $\Omega(n^2)$ with respect to \minsum{} and at least $\Omega(n)$ with respect to \minmax{}.
\end{theorem}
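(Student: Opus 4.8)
The plan is to establish two separate lower bounds on the price of equitability, one for each welfare objective, by exhibiting explicit instance families in which the best equitable outcome has welfare much worse than the unconstrained optimum. Since the price of equitability is defined analogously to $\text{PoTF}_W$ (as a supremum over instances of the ratio between the optimal equitable welfare and the optimal unconstrained welfare), it suffices to construct, for each objective, a family of instances where this ratio grows as claimed. The key design principle is to build an instance whose unconstrained optimum is very good (small disutility for everyone, or even zero for most agents) but where the \emph{only} way to equalize disutilities across all agents is to inflate everyone's disutility to a large common value.

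For the $\Omega(n^2)$ bound with respect to \minsum{}, I would aim for an instance where the \minsum{} optimum has total disutility $O(1)$ (or $O(\ell)$ with $\ell$ small), while every equitable outcome forces each of the $n$ agents to incur disutility $\Omega(n)$, yielding total disutility $\Omega(n^2)$. A natural candidate builds on the structure of the proportionality example: with $n$ projects and each agent $i$ disapproving all projects except $p_i$, any outcome gives $n-1$ agents a ``hit'' at every timestep. The idea is to arrange the timesteps so that an \emph{unconstrained} optimum can concentrate all disutility on a single agent (or a few agents), keeping the sum low, whereas equitability demands that the disutility be spread evenly, forcing each agent up to $\Omega(n)$ and the sum to $\Omega(n^2)$. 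First I would fix the precise number of timesteps (likely $\ell = \Theta(n)$) and verify that there exists an equitable outcome at all (equitability need not be achievable, but the price is a supremum over instances where it is, so I must ensure feasibility in my construction). Then I would lower-bound the equitable \minsum{} value and upper-bound the unconstrained \minsum{} value, and take the ratio.

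For the $\Omega(n)$ bound with respect to \minmax{}, the construction should exhibit an instance whose unconstrained \minmax{} value is $O(1)$ but whose every equitable outcome has \minmax{} value $\Omega(n)$. Here the contrast is starker because \minmax{} already measures the worst-off agent: I would want an instance where the \minmax{}-optimal outcome keeps the maximum disutility at some small value $\lambda^*$ by allowing agents to have \emph{unequal} disutilities, but where equitability forces the common disutility level upward to $\Omega(n)\cdot\lambda^*$. Intuitively, to equalize disutilities one may be forced to ``penalize'' agents who would otherwise be perfectly happy, dragging the uniform level up toward the total load spread evenly. I would reuse or adapt the same skeleton (all-but-one disapproval structure) and argue that equalization is incompatible with the low-\minmax{} schedule.

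The main obstacle will be guaranteeing that an equitable outcome actually \emph{exists} in the constructed instances while simultaneously forcing it to have bad welfare---these two requirements pull in opposite directions, since the easiest way to make equitable outcomes bad is to make equalization nearly impossible, but the definition requires at least one equitable outcome to exist for the ratio to be well-defined. I expect to handle this by carefully choosing $\ell$ relative to $n$ (and possibly padding with ``neutral'' timesteps where some project is disapproved by nobody, so that disutilities can be fine-tuned to reach exact equality) so that a single equitable outcome is guaranteed by an explicit schedule, yet every equitable outcome is pinned to the high common disutility value. A secondary technical point is computing the exact \minsum{} and \minmax{} optima of the unconstrained instance to confirm they are $O(1)$; this should follow from the greedy argument of Section~\ref{sec:util} together with a direct \minmax{} analysis, but I would double-check the constants so that the ratios genuinely scale as $\Omega(n^2)$ and $\Omega(n)$ respectively.
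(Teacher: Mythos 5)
Your proposal stops at the level of a strategy: the entire mathematical content of this theorem is (i) an explicit instance in which an equitable outcome exists, and (ii) a proof that \emph{every} equitable outcome of that instance has per-agent disutility $\Omega(n)$ while the unconstrained optimum is $O(1)$; both of these are exactly the steps you defer (``I would fix the precise number of timesteps,'' ``I would lower-bound the equitable \minsum{} value''). Worse, the one concrete skeleton you commit to---the proportionality example in which agent $i$ disapproves $P \setminus \{p_i\}$---provably cannot work. In that structure the chosen project is disapproved by exactly $n-1$ agents at every timestep, so $\sum_{i \in N} d_i(\mathbf{o}) = \ell(n-1)$ for \emph{every} outcome $\mathbf{o}$: the sum is outcome-invariant, concentrating disutility on one agent does not make it small, and the \minsum{} ratio is $1$. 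It fails for \minmax{} as well, since there the unconstrained optimum is attained precisely by rotating which agent is spared, i.e., by an (essentially) equitable outcome. The separation you want requires timesteps containing a ``free'' project disapproved by nobody---which you mention only as a feasibility patch, not as the core mechanism.

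For comparison, the paper obtains both bounds from a single instance: $n$ agents and $2n-3$ timesteps; timestep $1$ has a unique project disapproved only by agent $1$ (so the unconstrained \minsum{} and \minmax{} optima are both $1$, concentrating everything on agent $1$); every other timestep offers a free project $p_1$ and a costly project $p_2$, where in timesteps $2,\dots,n$ the disapprovers of $p_2$ form a cyclically shifting block of $n-2$ agents, and in timesteps $n+1,\dots,2n-3$ only agent $1$ disapproves $p_2$. Selecting $p_2$ everywhere is equitable (every agent ends at $n-2$), which settles existence. The crux is the step your plan lacks: if $p_2$ is chosen $k$ times among timesteps $2,\dots,n$, agents $2,\dots,n$ accumulate total disutility $k(n-2)$, and equitability among them forces $(n-1) \mid k(n-2)$; since $\gcd(n-1,n-2)=1$ and $k \geq 1$ (agent $1$'s forced hit must be matched), this pins $k=n-1$, which in turn forces $p_2$ at the remaining timesteps. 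Hence \emph{all} equitable outcomes have \minsum{} value at least $(n-1)(n-2)$ and \minmax{} value at least $n-2$, giving $\Omega(n^2)$ and $\Omega(n)$ against an optimum of $1$. Without a counting or divisibility argument of this kind---one that excludes every cheap equitable outcome rather than merely exhibiting one expensive one---the ratio, which is computed against the \emph{best} equitable outcome, cannot be bounded from below, and your construction-to-be cannot close the proof.
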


Exploring more suitable fairness concepts in this setting remains an intriguing direction for future work.

\section{Conclusion} \label{sec:conclusion}
In this work, we introduced and studied a model of temporal voting where agents can express disapprovals over candidates.
We investigated the computational complexity of two well-studied welfare objectives---\minsum{} and \minmax{}---and identified several settings where the problem can be solved efficiently, together with accompanying algorithms.
We also quantified the effects of enforcing temporal fairness on social welfare, and analyzed the strategic implications associated with these welfare objectives.
Further, we derived bounds on the price of temporal fairness and the competitive ratio of algorithms in the online setting.
Finally, we made a case for why proportionality and equitability may not be suitable as fairness measures for the public chores setting.

Directions for future work include defining and studying weaker forms of strategyproofness that may be compatible with \minmax{} in this setting, or identifying an appropriate (potentially weaker) notion of proportionality in this setting. It would also be interesting to consider a model with both positively- and negatively-valued candidates.

\bibliographystyle{named}
\bibliography{abb,ijcai25}

\onecolumn
\appendix
\begin{center}
\Large
\textbf{Appendix}
\end{center}

\vspace{2mm}

\section{Omitted Proofs in Section \ref{sec:welfare}}

\subsection{Proof of Theorem \ref{thm:gamma=1} (continued)}
Indeed, suppose there is an outcome $\mathbf o$ that satisfies all constraints. We can transform it into a matching as follows. For each $k\in T$, we determine the unique agent $i$ who disapproves $o_k$ in timestep $k$.
    Let $\lambda$ be the disutility of $i$
    from the partial outcome $(o_1, \dots, o_k, \varnothing, \dots, \varnothing)$. We then match $(i, \lambda)$ to $k$.
    We claim that this is a valid matching of size $|T|$. Indeed, $G$ contains the edge from $(i, \lambda)$ to $k$: clearly, $i$ disapproves a project at timestep $k$ (namely, $o_k$), and, moreover, if it was the case that $(k', \lambda')\in{\mathbf A}$ for some $k'\ge k$, $\lambda'<\lambda$, then $\mathbf o$ would violate that constraint. Also, each vertex $(i, \lambda)$ is matched to at most one timestep $k$, namely, the one where the disutility of $i$ changes from $\lambda-1$ to $\lambda$.
    
    Conversely, suppose $G$ admits a matching $M$ of size $|T|$.
    By construction, if our graph contains an edge from $(i, \lambda)$ to $k$, it also contains an edge from $(i, \lambda')$ to $k$ for each $\lambda'<\lambda$.
    Hence, we can assume that if a pair $(i, \lambda)$ is in $M$, then so are all pairs of the form $(i, \lambda')$ with $\lambda'<\lambda$. Now, to construct $\mathbf o$, for each $k\in T$ we identify a pair $(i, \lambda)$ such that $(i, \lambda)$ is matched to $k$, and select $o_k$ from $D_{ik}$ (the presence of the edge $\{(i, \lambda), k\}$ ensures that $D_{ik}\neq\varnothing$). We will argue that $\mathbf o$ satisfies all constraints in $\mathbf A$. Indeed, suppose $\mathbf o$ violates a constraint $(k', \lambda')\in{\mathbf A}$. Then there is an agent $i$ whose disutility at timestep $k'$ is $\lambda>\lambda'$. Let $k\le k'$ be the timestep where $i$'s disutility became $\lambda$. Then there is an edge from $(i, \lambda)$ to $k$ in our graph, but $\lambda'<\lambda$, $k\le k'$, a contradiction with how the edges of $G$ are constructed. 

\subsection{Proof of Theorem \ref{thm: NP-hard}}
We reduce from the NP-complete \textsc{3-Occur-3SAT} problem. \textsc{3-Occur-3SAT} is a restricted instance of SAT where all clauses have a maximum length of $3$ and all variables occurs at most $3$ times. 
An instance of \textsc{3-Occur-3SAT} is defined by a boolean formula $F$ with $n$ variables $x_{1},\dots,x_{n}$ and $m$ clauses $c_{1},\dots,c_{m}$. For each $i\in[n]$, the literal $x_{i}$ appears twice and the literal $\bar{x}_{i}$ appears once. $F$ is a \texttt{YES} instance if and only if there exist an assignment for the variables such that $F$ evaluates to \texttt{TRUE}.

We first perform a polynomial-time preprocessing step that yields an equisatisfiable formula: 
for all clauses with just a single literal, we assign \texttt{TRUE} to that literal.
This gives us a formula $F'$ that is equisatisfiable to $F$.
Also note that all clauses in $F'$ are of length $2$ or $3$, and all literals occurs at most twice in $F'$.
Next, let $X = \{x_1, \dots, x_n \}$ and $\mathcal{C} = \{C_1, \dots C_m\}$  be the set of $n$ variables and the set of $m$ clauses in $F'$, respectively. Furthermore, let $\mathcal{C}_2  = \{C'_1, \dots, C'_{m'} \} \subseteq \mathcal{C}$ be the set of $m'$ clauses of length $2$, where $m' \leq m$.

In constructing our instance of \minmaxdec{}, we will have $m$ agents (each agent corresponding to a clause), and $m'+ n$ timesteps. 
For $i \in [m']$, let timestep $t = n + i$ contain one project that is in the disapproval set of the agent that corresponds to $C'_i$. 
Moreover, for $i \in [n]$, let timestep $t = i$ contain two projects, with $p_1$ corresponding to setting $x_i$ to \texttt{TRUE} and $p_2$ corresponding to setting $x_i$ to \texttt{FALSE}. 
For agent $j \in [m]$, let $D_{jt} = \{p_1\}$ if $\neg x_i \in C_j$, $D_{jt} = \{p_2\}$ if $x_i \in C_j$ and $D_{jt} = \varnothing$ otherwise. 
Intuitively, if the agent corresponding to a clause disapproves a project, then we assign some literal in the clause to \texttt{FALSE}. 
Note that as all literals occur at most twice, for each timestep, at most $2$ agents will disapprove the project. 

Then we will prove that $F'$ is satisfiable if and only if  for $\textbf{A} = \{(m' + n , 2)\}$, i.e., there exists an outcome $\mathbf{o}$ such that for all $i \in [m]$, $d_i(\mathbf{o}^{(m'+n)}) \leq 2$.

For the `if' direction, the outcome corresponding to assigning the variables $(o_{m' + 1}, \dots, o_{m' + n})$ directly translates to a Boolean assignment of the variables in $F'$. For each clause $C_i$ of length $3$, $d_i(\mathbf{o}^{(m'+n)}) - d_i(\mathbf{o}^{(m')}) \leq 2$ implies that at most $2$ literals in the clause $C_i$ is \texttt{FALSE}.  
For each clause $C'_{i'}$ of length $2$ (which corresponds to some agent $i$), because of the project at timestep $i'$ that has to be chosen, we have $d_i(\mathbf{o}^{(m'+n)}) \leq 1$, and at most $1$ literal in the clause $C'_{i'}$ is \texttt{FALSE}. 
Thus, the Boolean assignment corresponding to $\textbf{o}$ will assign \texttt{TRUE} to some literal in each clause, implying that $F'$ is satisfiable.

For the `only if' direction, suppose we have a Boolean assignment satisfying $F'$ that assigns at least one literal in every clause to \texttt{TRUE}. 
Consider the the outcome $\mathbf{o}$ that directly corresponds to such an assignment. 
We have that for each clause $C_i$, $d_i(\mathbf{o}^{(m'+n)}) - d_i(\mathbf{o}^{(m')}) \leq 2$ if the clause is length $3$, and $d_i(\mathbf{o}^{(m'+n)}) - d_i(\mathbf{o}^{(m')}) \leq 1$ if the clause is length $2$. Thus, for all $i \in [m]$, we have that $d_i(\mathbf{o}^{(m'+n)}) \leq 2$.

\subsection{Proof of Theorem \ref{thm: FPT}}
We begin by performing a preprocessing step.
    For each project $p \in P$, create $\ell$ copies of it, i.e., 
    replace $p$ with projects $\{p^1, \dots, p^\ell\}$ in $P$, and modify the disapproval sets as follows:
    for each $i \in N$ and $k \in T$, if $p\in D_{ik}$, then replace $p$ with projects $\{p^1, \dots, p^\ell\}$ in $D_{ik}$; and if $p \notin D_{ik}$, then add all projects $\{p^1, \dots, p^{k-1}, p^{k+1}\dots, p^\ell\}$ to $D_{ik}$.

    In this modified instance, for each project $p \in P$, there is at most one timestep $k \in T$ such that $p\notin\cap_{i\in N}D_{ik}$.
    
    Then, we define the {\em type} of a project as the set of agents who do not disapprove it: 
    let the type of project $p \in P$ be $\sigma(p) = \{i\in N: p\notin D_{ik} \text{ for some $k\in T$}\}$. 
    Recall that from the preprocessing step, we have that for each $p\in P$, there exists a timestep $k\in T$ such that $p\notin D_{ik}$ for all $i\in \sigma(p)$, and $p\in D_{ik'}$ for all $i\in N$, $k'\in T\setminus\{k\}$.
    Then, there are at most $2^n$ different project types. 

    For the set of constraints $\textbf{A} = \{(k_1, \lambda_1), \dots, (k_\tau, \lambda_\tau)\}$, we can partition the timesteps into $\tau + 1$ groups $T_1, \dots, T_{\tau+1}$ where $T_i = \{k_{i-1} + 1, \dots, k_i\}$.
    Then, $k_0$ corresponds to (a dummy) timestep $0$, whereas $k_{\tau + 1}$ corresponds to the last timestep $\ell$. 
    Note that we can freely select any project in timestep group $T_{\tau+1}$ as there are no further constraints imposed on the outcome. 
    Thus, we can restrict our attention to only timestep groups $T_1$ to $T_\tau$.
    Thus, we can classify timesteps by the types of projects present in them and by the timestep group they belong to, giving us at most $2^{2^n} \cdot \tau$ different \emph{timestep types}. Let $Q\subseteq 2^N$ be the set of all project types and  let $\mathcal{R} \subseteq 2^Q \times  [\tau]$ be the set of all timestep types.
    
    Now, we construct our integer linear program (ILP).
    For each $R \in \mathcal{R}$, let $z_{R,k}$ be the number of
    timesteps of type $R$ present in timestep group $T_k$.
    For each $i \in N$, let $Q_i$ be the set of project types that agent $i$ disapproves of. 
    For each $R \in \mathcal{R}$, $k \in [\tau]$ and $r\in Q$, we introduce an integer variable $x_{R, k,  r}$ representing the number of timesteps of type $R$ in which a project of type $r$ was chosen in timestep group $T_k$.

    The ILP (feasibility problem) is defined as follows:
	\begin{enumerate}[(1)]
		\item $\sum_{r \in R} x_{R, k, r} \geq z_{R,k} \text{ for each } R \in \mathcal{R}$ and $k \in [\tau]$,
		\item $\sum_{R \in \mathcal{R}}\sum_{j \in [k]}\sum_{r \in Q_i} x_{R, j, r} \leq \lambda_k \text{ for each } i \in N$ and $k \in [\tau]$, and
		\item $x_{R, k, r} \geq 0$ for each $R \in \mathcal{R}$, $k\in [\tau]$ and $r\in Q$.
	\end{enumerate}      
    The first constraint mandates that we select at least one project per timestep, whereas the second constraint ensures that all the constraints in $\textbf{A}$ is satisfied.
    
    Then, since there are at most $\mathcal{O}(2^{n + 2^n} \cdot \tau)$ variables in the ILP, the classic result of Lenstra~\shortcite{lenstra1983integer} implies that our problem is FPT with respect to $n + \tau$.

\subsection{Proof of Theorem \ref{thm: XP}}
We describe a dynamic programming-based algorithm that runs in $\mathcal{O}(\ell^{n+1} \cdot m)$ time. 

    Note that for purposes of the dynamic programming algorithm, let there be a dummy timestep $0$.
    For each agent $i\in N$, timestep $k \in T$, and a partial outcome vector $\mathbf{o}^{(k)}$ (i.e., first $k$ entries---excluding timestep $0$---of the outcome vector $\mathbf{o}$ each has a project chosen, whereas the remaining entries are empty), let $c_{ik}(\mathbf{o}^{(k)})  = \sum_{j = 0}^k d_i(o_j)$.
    Then for any partial outcome $\mathbf{o}^{(k)}$, we define the disutility profile up to timestep $k$ to be $C_k(\mathbf{o}^{(k)}) := (c_{1k}(\mathbf{o}^{(k)}), \dots , c_{nk}(\mathbf{o}^{(k)})) \in [\ell]^n$.

    For each $u' \in [\ell]^n$ and timestep $k \in T$, let $Q_{u',k} = 1$ if and only if there exists some partial outcome $\mathbf{o}^{(k)}$ such that $C_k(\mathbf{o}^{(k)}) = u'$ and any constraint $(k_i,\lambda_i)$ is satisfied if $\lambda_i \leq k$; and $Q_{u',k} = 0$ otherwise.
    For the base case, we have that $Q_{u,0} = 1$ if $u = (0, \dots, 0)$ and $Q_{u,0} = 0$ otherwise.
    
    Then, let $\mathbf{I}_{i,p,k} $ be an indicator function that takes on the value $1$ if and only if agent $i \in N$ disapproves of project $p \in P$ at timestep $k \in T$. 
    Also let $f(u',p,k) = (u'_i - \mathbf{I}_{i,p,k} \mid i \in N)$, where $u'_i$ is the $i$-th component of $u'$.

    For each $k' \in T$, if there is no constraint $(k', \lambda) \in \textbf{A}$ for some $\lambda$, let 
    \begin{equation*}
        Q_{u,k'} =
        \begin{cases}
          0 & \text{if  }  \forall p \in P \text{ such that } Q_{f(u,p,k'),k'-1} = 0 \\
            1 & \text{otherwise} \\
        \end{cases}
    \end{equation*}
    
    Otherwise, if there is the constraint $(k', \lambda) \in \textbf{A}$ for some $\lambda$, then let
    \begin{equation*}
        Q_{u,k'} =
        \begin{cases}
            0 & \text{if } \max(u) \leq \lambda \\
            0 & \text{if }  \forall p \in P \text{ such that } Q_{f(u,p,k'),k'-1} = 0\\
            1 & \text{otherwise} \\
        \end{cases}
    \end{equation*}
    At the end, return `yes' if $Q(u,\ell) = 1$ for some $u \in [\ell^n].$

    Since there are $l^{n+1}$ possible $(u,k)$ pairs, and it takes $\mathcal{O}(m)$ time to update $Q_{u,k}$, this dynamic programming algorithm runs in time $\mathcal{O}(\ell^{n+1} m)$.
    
\subsection{Proof of Theorem \ref{thm: W[2]}}
To show that \minmaxdec{} is XP with respect to $\ell$, observe that there are $m^\ell$ possible outcomes. Thus, when $\ell$ is constant, the total number of outcomes is polynomially-bounded. We can thus iterate through all the outcomes and return `yes' if and only if there is an outcome that satisfies $\textbf{A}$.
	
To show W[2]-hardness, we will show that \minmaxdec{} is W[2]-hard even if $\tau = 1$.
    We reduce from the \textsc{Dominating Set (DS)} problem.
An instance of \textsc{DS} consists of a graph $G = (V,E)$ and an integer $\kappa$; it is a yes-instance if there exists a subset $D \subseteq V$ such that $|D| \leq \kappa$ and every vertex of $G$ is either in $D$ or has a neighbor in $D$, and a no-instance otherwise.
\textsc{DS} is known to be W[2]-complete with respect to the parameter $\kappa$ \cite{niedermeier_invitation_2006}.

Given an instance $(G, \kappa)$ of \textsc{Dominating Set} with $G = (V,E)$, $V=\{v_1, \dots, v_n\}$,  
set $N=[n]$, $P = \{p_1,\dots,p_n\}$, $\ell = \kappa$.
Then for each $i\in N$
and $k\in T$
let $D_{ik}=P \setminus \{p_j : i=j\text{ or }\{v_i, v_j\}\in E\}$.
We claim that $G$ admits a dominating set $D$ with $|D| \leq \kappa$ if and only if there exists an outcome $\mathbf{o}$ such that $d_i(\mathbf{o}) \leq \kappa - 1$ for all agents $i \in N$. 

For the `if' direction, consider an outcome $\mathbf{o} = (p_{j_1},\dots,p_{j_\kappa})$
such that each agent does not disapprove a project in at least $1$ of the timesteps, and 
set $D=\{v_{j_1}, \dots, v_{j_\kappa}\}$. Then $D$ is a dominating set of size at most $\kappa$. 
Indeed, consider any vertex $v_i\in V$. Since agent $i$ does not disapprove $p_{j_k}$ for some $k\in T$, we have that $v_{j_k}\in D$ 
and either $i=j_k$ or $\{v_i, v_{j_k}\}\in E$.
Note that if there are projects chosen more than once, we can simply let $|D|<\kappa$. 

For the `only if' direction, observe that a dominating set $D=\{v_{j_1}, \dots, v_{j_s}\}$ with $s \leq \kappa$ can be mapped to an outcome $\mathbf{o} = (p_{j_1}, \dots, p_{j_s}, p_1, \dots, p_1)$, where $p_1$ is selected at the last $\kappa-s$ timesteps. As any vertex in $G$ is either already in $D$ or has some neighbor in $D$, we have that $d_i(\mathbf{o}) \leq \kappa - 1$ for each agent $i \in N$.

\section{Omitted Proofs in Section \ref{sec:potf}}
\subsection{Proof of Theorem \ref{thm:potf_minmax}}
    For constraints $\textbf{A} = \{(k_1, \lambda_1), \dots, (k_\tau, \lambda_\tau)\}$, we can partition the set of timesteps into $\tau + 1$ groups $T_1, \dots, T_{\tau+1}$ where $T_i = \{k_{i-1} + 1, \dots, k_i\}$ with $k_0$ corresponding to timestep $0$ and $k_{\tau + 1}$ corresponding to timestep $\ell$.
    Let timestep $0$ be a dummy timestep with no disapproved projects.
    
    We first prove the lower bound.  Let $\tau' = \min(\tau, n)$ and $P = \{p_1,p_2\}$.
    Then, for each $i \in [\tau']$, let timestep group $T_i$ consist of two identical timesteps where agent $i$ disapprove only $p_1$ and all other agents disapprove only $p_2$. 
    For the constraint $(k_i,\lambda_i)$, we have that $\lambda_i = i$. Let timestep group $T_{\tau'+1}$ be empty.

    Then, the outcome that selects $p_1$ at all timesteps has \minmax{} value of $2$. 
    However, the constraints in $\mathbf{A}$ forces us to select $p_1$ and $p_2$ at one timestep each in every timestep group. Thus, the \minmax{} value of any outcome that satisfy the temporal constraints is $\tau'$ and the $\text{PoTF}_\minmax = \frac{\tau'}{2} = \Omega(\tau') = \Omega(\min(\tau,n))$.
    
    For our upper bound, we will show $\text{PoTF}_\minmax =\mathcal{O}(\tau)$ and then show $\text{PoTF}_\minmax =\mathcal{O}(n)$.

    Let $\mathbf{o}^*$ be the optimal outcome and let $W^*$ be the \minmax{} value of $\mathbf{o}^*$. Then if there is an outcome $\mathbf{o}$ satisfying all the constraints in $\textbf{A}$, then there is an outcome $\mathbf{o}'$ satisfying all the constraints in $\textbf{A}$ such that $\minmax{}(o') \leq (\tau + 1) W^*$ and hence the PoTF$_\minmax$ is at most $\mathcal{O}(\tau)$.

    Let $i \in \{0\} \cup [\tau]$ be the largest integer such that $\lambda_i \leq i W^*$ where $(k_0,\lambda_0) = (0,0)$. 
    We now add the constraint $(\ell, \lambda_i +   W^* )$. 
    First note that such a constraint is stricter than all constraints $(k_j, \lambda_j$) for $j > i$ where if an outcome satisfies all constraints $(k_j, \lambda_j)$ when $j\leq i$ and the constraint $(\ell, \lambda_i +   W^*)$, then the outcome satisfies $\textbf{A}$. 
    Furthermore, as $\lambda_i \leq \tau W^*$, the \minmax{} value of the outcome is at most $(\tau + 1) W^*$.

    If there exists at outcome $\mathbf{o}$ that satisfies the constraints in $\textbf{A}$, then construct $\mathbf{o}'$ by selecting the same projects as $\mathbf{o}$ for the first $k_i$ timesteps, and the same projects as $\mathbf{o}^*$ for the remaining timesteps. 
    Such an outcome will satisfy all constraints $(k_j, \lambda_j)$ when $j\leq i$, by our assumption. 
    Moreover, for all agents $x$, we have that $\sum_{j=1}^{\ell} d_x(o_j) = \sum_{j=1}^{k_i} d_x(o_j) + \sum_{j=k_i + 1}^{\ell} d_x(o_j) \leq \lambda_i + W^*$.

    At any timestep, if there is a project that is disapproved by no one, then it should always be selected. 
    Thus, without loss of generality, we can assume that at every timestep, every project is disapproved by some agent. 
    Then, any outcome has a \minsum{} value of at least $\ell$, and so every result has an \minmax{} value of at least $\ell / n$. Furthermore, the \minmax{} value is trivially at most $\ell$. Thus, $\text{PoTF}_\minmax \leq \frac{\ell}{\ell / n} = n$.

\section{Omitted Proofs in Section \ref{sec:sp}}
\subsection{Proof of Theorem \ref{thm:minsum_sp}}
    \textsc{Greedy Min-Sum} picks the project that has the least number of disapprovals at each timestep (with lexicographical tiebreaking, as mentioned earlier).

    Consider some agent $i \in N$.
    Any project that $i$ does not disapprove of, he is said to \emph{approve}.
    Let agent $i$'s disapproval vector under truthful preferences be $\mathbf{D}_i = (D_{i1},\dots,D_{i\ell})$. 
    Note that agent $i$ can only lie by misreporting one or more of $D_{ik}$ (for some $k \in T$) either by including projects she approves of and/or by excluding projects she disapproves of.
    For each $k \in T$, when selecting $o_k$, \textsc{Greedy Min-Sum} only considers the information in $D_{ik}$ and disregards the rest of $\mathbf{D}_i$. Hence, misreporting $D_{ik}$ can only affect the outcome of $o_k$ and only when $o_k$ changes from a project in $D_{ik}$ to a project not in $D_{ik}$ can misreporting $D_{ik}$ decrease the disutility of agent $i$.

    We will now show that no matter the preferences of other agents in $N\setminus \{i\}$ at timestep $k$, misreporting $D_{ik}$ can never change $o_k$ from a project in $D_{ik}$ to a project not in $D_{ik}$. Hence, agent $i$ cannot obtain a lower disutility by misreporting her disapproval set.
    
    Suppose by truthfully reporting $D_{ik}$, $o_k=c$ and $c \in D_{ik}$. That means for each $c' \notin D_{ik}$, either the number of disapprovals for $c'$ is more than the disapprovals for $c$, or both projects have an equal number of disapprovals and $c$ is preferred to $c'$ in the tie-breaking order. 
    Both the actions of disapproving projects agent $i$ approves of and excluding projects agent $i$ disapproves of cannot decrease the number of disapprovals any projects not in $D_{ik}$ or increase the number of disapprovals of $c$. 
    Hence, no matter how agent $i$ misreports her preference at timestep $k$, no project she approves of will be chosen.

\subsection{Proof of Theorem \ref{prop:minsum_gsp}}
    Consider the instance with five agents $N = \{1,2,3,4,5\}$, two projects $P = \{p_1,p_2\}$, and three timesteps.
    For any agent $i \in \{1,2,3\}$, let agent $i$ disapprove $p_1$ only at timestep $i$, and let agents $4$ and $5$ disapprove $p_2$ at all three timesteps. Then, the only \minsum{} outcome is $\mathbf{o} =(p_1,p_1,p_1)$. For each agent $i\in\{1,2,3\},$ we have that $d_i(\mathbf{o}) = 1$. 
    However, if all agents in $\{1,2,3\}$ misreport that they disapprove $p_1$ in all timesteps, then the only \minsum{} outcome is $\mathbf{o}' = (p_2,p_2,p_2)$ and $d_i(\mathbf{o}) = 0$ for all $i \in \{1,2,3\}$, thereby violating group-strategyproofness.
\subsection{Proof of Theorem \ref{thm: NP-c, GSP}}
We once again reduce from the NP-complete \textsc{3-Occur-3SAT} problem. \textsc{3-Occur-3SAT} is a restricted instance of SAT where all clauses have a maximum length of $3$ and all variables occurs at most $3$ times. 
Consider a Boolean formula $F$. 
Then, $F$ is satisfiable if there exists some assignment of Boolean values to variables such that the conjunction of all clauses evaluate to \texttt{TRUE}; and is not satisfiable otherwise.

We first perform a polynomial-time preprocessing step that yields an equisatisfiable formula: 
for all clauses with just a single literal, we assign \texttt{TRUE} to that literal.
This gives us a formula $F'$ that is equisatisfiable to $F$.
Also note that all clauses in $F'$ are of length $2$ or $3$, and all literals occurs at most twice in $F'$.
Next, let $X = \{x_1, \dots, x_n \}$ and $\mathcal{C} = \{C_1, \dots C_m\}$  be the set of $n$ variables and the set of $m$ clauses in $F'$, respectively. Furthermore, let $\mathcal{C}_2  = \{C'_1, \dots, C'_{m'} \} \subseteq \mathcal{C}$ be the set of $m'$ clauses of length $2$, where $m' \leq m$. Furthermore, if $m \leq 5$, then we can check the satisfiability of $F$ in constant time. Thus, without loss of generality, let $m > 5$.

For our instance, we have $n + 1$ timesteps and $m + 4$ agents (corresponding to the clauses $\mathcal{C}$ and a set of $3$ dummy agents $N_D$). 
First, we have a timestep with $1$ project that all agents corresponding to clauses in $\mathcal{C}_2$ disapproves. 
Then, the rest of the timesteps correspond to assigning a variable $x_i$ to \texttt{TRUE} or \texttt{FALSE} and there are $3$ projects. Agents corresponding to a clause $C$ with the literal $x_i \in C$ disapproves of $p_1, p_3$, and agents corresponding to a clause $C$ with the literal $\neg x_i \in C$ disapproves of $p_2,p_3$ and agents in $N_D$ all disapprove $p_1, p_2$. 

Then, $F$ is satisfiable if the agents corresponding to $C$ can misreport their preference such that all agents corresponding to $C$ can decrease their disutility. 

We note that for all the timesteps except the first, $p_1$ and $p_2$ are each disapproved by $4$ or $5$ agents and $p_3$ is disapproved by $3$ agents. 
Hence, $p_3$ is always chosen and all agents corresponding to a clause in $C$ have a disutilty of at most $3$. 
However, as $m > 5$, the agents in $C$ can ensure $p_1$ or $p_2$ is selected again. 
Furthermore, as the agents that disapprove $p_3$ in these timesteps is a strict superset of the agents (corresponding to an agent in $C$) that disapprove $p_1$ or $p_2$, they would always rather $p_1$ or $p_2$ be selected instead. 

Thus, the problem becomes: is an outcome consisting of $p_1,p_2$ such that all agents corresponding to $C$ disapproves of projects chosen in at most $2$ timesteps. 
By a similar reasoning as that in Theorem \ref{thm: NP-hard}, this problem is NP-hard.

\subsection{Proof of Proposition \ref{prop:minmax_gsp}}
Consider an instance with three agents $N = \{1,2,3\}$, three projects $P = \{p_1,p_2,p_3\}$, and two timesteps.
    Let agents' disapproval sets $\mathbf{D}_1, \mathbf{D}_2, \mathbf{D}_3$ be such that $D_{11} = D_{21} = D_{31} = \{p_2,p_3\}$, and $D_{i2} = \{p_i\}$ for all $i \in \{1,2,3\}$.

    If $p_1$ is not selected at the first timestep, there is an agent that disapproves the project chosen at both timesteps, so the \minmax{} value is $2$. Thus, for every \minmax{} outcome $\mathbf{o} = (o_1,o_2)$ we have $o_1 = p_1$ and $o_2 \in \{p_1,p_2,p_3\}$. This ensures that the \minmax{} value is $1$. 
 
    Assume without loss of generality that $\mechanism$ selects $\mathbf{o} = (p_1, p_2)$ when the agents report truthfully. Then $d_1(\mathbf{o}) = 1$. 
    Then, agent 1 can misreport their disapproval vector as $\mathbf{D}'_1 = (D'_{11}, D'_{12})$, where $D'_{11} = D'_{12}=\{p_1\}$.
    In this case, the only \minmax{} outcome is ${\mathbf o}' = (p_1,p_1)$, so $\mechanism$ is forced to output ${\mathbf o}'$. Moreover, agent~$1$'s disutility (with respect to his true preference) from ${\mathbf o}'$ is $d_1(\mathbf{o}') = 0 < d_1(\mathbf{o})$. 
    Thus, agent $1$ has an incentive to misreport.
    
\subsection{Proof of Theorem \ref{thm_sigma2p}}
Note that since it is NP-complete to even find a \minmax{} outcome (Theorem~\ref{thm: NP-hard}), under $\mathcal{M}_{lex}$, an agent cannot verify in polynomial-time whether a disapproval vector $D'_i$ would give them a strictly lower disutility, thus showing containment in $\Sigma_2^P$.

Next, we prove $\Sigma_2^P$-hardness. 
We reduce from the \textsc{2-Alternating-Quantified-Satisfiability problem}, which was shown by Johannes~\shortcite{phdthesis} to be $\Sigma^P_2$-complete. It is formally defined as follows.
\begin{tcolorbox}[title=\textsc{2-Alternating-Quantified-Satisfiability ($B^{\textit{CNF}}_2$)}]
    \textbf{Input}: Sets $X$ and $Y$ of variables, Boolean formula $F$ over $X$ and $Y$ in conjunctive normal form with exactly 3 variables in each clause.
    \tcblower
    \textbf{Question}: Is there an assignment for $X$ so that there is no assignment for $Y$ such that $F$ is satisfied?
\end{tcolorbox}
 First, if a clause contains both the literal $l$ and $\neg l$, remove the clause. If $|Y| \leq 1$, we can resolve away the variables in $Y$ and accept if there is a clause remaining. Thus, we can focus on the case when $|Y| \geq 1$.

Let $F$ have m clauses $C = \{c_1, \dots, c_m\}$ containing variables in $X$ and $Y$. Note that for all $i\in [m]$, $|c_i| = 3$.

Without loss of generality, assume that if at a timestep both $p_1$ and $p_2$ have the same set of agents disapproving it, then $p_1$ is chosen by $\mathcal{M}_{lex}$ over $p_2$. 

For our instance, let there be $m + 2$ agents, with the first $m$ corresponding to the $m$ clauses in $F$. 
Also, let agent $m + 1$ have the highest priority. 
We have five groups of timesteps $T_1,T_2,T_3,T_X, T_Y$, and let
\begin{itemize}
    \item $T_1$ have $|Y| - 2$ identical timesteps where in each timestep, agent $m + 1$ disapproves $p_1$ while agents $1$ to $m$ disapprove $p_2$;
    \item $T_2$ have two identical timestep, where in each timestep, agent $m+1$ disapproves $p_1$, and agent $m + 2$ disapproves $p_2$;
    \item $T_3$ have only one timestep, where agent $1$ disapproves $p_1$, no agent disapprove $p_2$ and agents 1 to $m$ disapprove $p_3$;
    \item $T_X$ have $|X|$ timesteps where each timestep corresponds to assigning a variable in $X$. 
    For a timestep corresponding to assigning a particular variable $x \in X$, let agent $m+1$ disapproves $p_1$, and agent $i \in [m]$ disapproves $p_2$ if the literal $x \in c_i$ and disapproves $p_3$ if the literal $\neg x \in c_i$; and
    \item $T_Y$ have $|Y|$ timesteps where each timestep corresponds to assigning a variable in $Y$. 
    For a timestep corresponding to assigning a particular variable $y \in Y$, let agent $m+1$ and agent $m+2$ disapprove both $p_1$ and $p_2$, and agent $i \in [m]$ disapproves $p_1$ if the literal $x \in c_i$ and disapproves $p_2$ if the literal $\neg x \in c_i$.
\end{itemize}
Then, we will prove that agent $m+1$ can decrease their disutility by misreporting their preference if and only if there is an assignment $\mathcal{A}$ for $X$ (so that there is no assignment for $Y$) such that $F$ is satisfied.

Observing agent $m+1$ and $m+2$ preference over the projects for timesteps in the groups $T_2, T_Y$, the minimum possible \minmax{} value is $|Y| + 1$. 
This is achievable by selecting $p_2$ for all the timesteps in $T_1$, selecting each of $p_1$ and $p_2$ once in $T_2$, selecting $p_2$ for the timestep in $T_3$, and selecting any project for all timesteps in $T_X,T_Y$. 
For agents $1$ to $m$, as they disapprove exactly 3 project across $T_X,T_Y$, their disutility from this selection is at most $|Y| + 1$.  Furthermore, if the \minmax{} value of the outcome is $|Y| + 1$, then agent $m+1$ will have a disutility of $|Y| + 1$.

For the `if' direction, let $\mathcal{A}$ be an assignment over the variables in $X$ (so that there is no assignment for $Y$) such that $F$ is satisfied. 
Agent $m+1$ should augment his preference for the timesteps in $T_3,T_X$ in the following way:
\begin{itemize}
    \item report that he disapprove $p_2$ for the timestep at $T_3$; 
    \item for a timestep in $T_X$ corresponding to the variable $x$, additionally report that he disapprove $p_3$ if $\mathcal{A}[x] = 1$ and report that he disapprove $p_2$ if $\mathcal{A}[x] = 0$
\end{itemize}
By doing so, agent $m+1$ can guarantee that his disutility is at most $|Y|$.

Under this misreported preference, we show that there is no outcome with a \minmax{} value of at most $|Y| + 1$. 
Suppose that such an outcome exists, let it be $\mathbf{o}$. 
Then $p_2$ should be selected across all timesteps in $T_1$ and $p_3$ must be selected at the timestep in $T_3$. 
Furthermore, for each variable $x \in T_x$ and at the corresponding timestep, if agent $m+1$ disapprove $p_3$, then $p_2$ must selected at that timestep in $\mathbf{o}$, and if  agent $m+1$ disapprove $p_2$, then $p_3$ must selected at that timestep in $\mathbf{o}$. 

We are essentially choosing an outcome for the timestep that corresponds to the assignment under $\mathcal{A}$.
For the timesteps corresponding to $T_1, T_3$, agents $1$ to $m$ have a disutility of $|Y| - 1$. 
Thus, if $\mathbf{o}$ has \minmax{} value of at most $|Y| + 1$, then agents $1$ to $m$ will disapproves at most $2$ projects in $T_X,T_Y$, even when the projects in $T_X$ are selected according to $\mathcal{A}$. 
This corresponds to an assignment $\mathcal{B}$ over $Y$ that satisfy $F$, but which by the premise of $\mathcal{A}$ does not exists.

Hence, all outcomes has \minmax{} value of at least $|Y| + 2$. 
Furthermore, selecting $p_2$ at all timesteps $T_1,T_2$, selecting $p_3$ at $T_3$, selecting the project agent $m+1$ does not disapprove at $T_X$, and selecting any project for timesteps in $T_Y$ yields an outcome with \minmax{} value of at $|Y| + 2$, with agent $m+1$ only disapproving $|Y|$ projects across all timesteps. 
Thus, with lexicographical tie-breaking and $m+1$ having the highest priority, agent $m+1$ is able to decrease their disutility by misreporting their preferences. 

For the `only if' direction, suppose that agent~$m+1$ is able to decrease their disutility by misreporting their preference. Let $\textbf{Z}$ be the misreported preference of agent~$m+1$. 
For all timesteps in $T_1, T_2, T_3, T_X$, agent $m+1$ cannot benefit from disapproving any additional projects (not already disapproved) in those timesteps groups. 
Otherwise, as the set of agents that disapprove any other project available at the timestep is a superset of $\{m+1\}$, $p_1$ will be chosen at that timestep. 
This means that agent $m+1$ will have a disutility of at least $|Y|+1$. 
Furthermore, if agent $m+1$ does not disapprove $p_2$ at the timestep at $T_3$, then there is an outcome that has \minmax{} value of $|Y| + 1$, and agent $m+1$ cannot decrease his disutility. 
Thus, in order for agent $m+1$ to be able to obtain a lower disutility by reporting $Z$, he must disapprove $p_2$ at timestep $2$, with an additional $0$ or $1$ project to disapprove at each timestep of $T_X$.

For $x \in X$ and the timestep in $T_X$ corresponding to $x$, let $\mathcal{A}[x] = 1$ if $p_3$ was disapproved at the timestep and $\mathcal{A}[x] = 0$ otherwise. 
Suppose for a contradiction there is an assignment $\mathcal{B}$ over $Y$ that satisfies $F$ after assigning the variables in $Y$ based on $\mathcal{A}$. 
By selecting projects in $T_X,T_Y$ according to $\mathcal{A} and \mathcal{B}$, agents $1$ to $m$ will only disapprove at most $2$ project across $T_X,T_Y$, and agent $m+1$ does not disapprove any project in $T_X$. 
Additionally, by selecting $p_2$ for all the timesteps in $T_1$, selecting both $p_1$ and $p_2$ once in $T_2$, and selecting $p_3$ for the timestep in $T_3$, all agents will get a disutility of at most $|Y| + 1$.
Thus, as the minimum \minmax{} value is $|Y| + 1$, agent $m+1$ cannot decrease his disutility, a contradiction.

\section{Omitted Proofs in Section \ref{sec:online}}
\subsection{Proof of Proposition \ref{prop:online_greedy_competitive}}
For \textsc{Greedy Min-Sum}, consider the instance with $n = 2k + 1$ agents, $P = \{p_1,p_2\}$, and $k$ timesteps. We note that $k = \Omega(n)$. 
    For each timestep $i \in \{1,\dots,k\}$, let agent $2k+1$ disapprove only $p_1$ and let agents $2i -1$ and $2i$ disapprove only $p_2$. 
    Then, \textsc{Greedy Min-Sum} will return the outcome $\mathbf{o} = (p_1,\dots,p_1)$ with \minmax{} value of $k$.
    However, the outcome $\mathbf{o}'$ has \minmax{} value of $1$. Thus, CR$_\minmax(\textsc{Greedy Min-Sum}) \geq k =\Omega(n)$.

    For \textsc{Greedy Min-Max}, consider the instance with $n$ agents, $P = \{p_1,p_2\}$, and $2n$ timesteps. 
    For each $i \in [n]$, at timestep $2i -1$ and $2i$, let agent $i$ disapprove only $p_1$, and all other agents $i \in [n] \setminus \{i\}$ disapprove only $p_2$.
    Then, for all $i \in [n]$, \textsc{Greedy Min-Max} will return the outcome $\mathbf{o} = (p_1,p_2,p_1,p_2,\dots)$ with \minmax{} value $n$.
    However, outcome $\mathbf{o} = (p_1,\dots,p_1)$ has a \minmax{} value of 2. Thus, CR$_\minmax(\textsc{Greedy Min-Max}) \geq n/2 = \Omega(n)$.
    
\subsection{Proof of Proposition \ref{prop:comp_ratio_lowerbound}}
    Let $n = 2^k$.    
    Now, we construct an instance with two projects and $k$ timesteps. 
    Consider the function $\texttt{RandomSplit} : 2^{[n]}\rightarrow 2^{[n]} \times 2^{[n]}$ that randomly split a set into two equal halves, and the following sequence: $L_0 = [n]$ and $L_i, R_i = \texttt{RandomSplit}(L_{i-1})$ for $i \in [k]$. 
    Let timestep $i$ be chosen with equal probability from one of two possibilities:
    \begin{itemize}
        \item possibility $1$ has agents in $L_i$ disapproving $p_1$ and agents in $R_i$ disapproving $p_2$; and
        \item possibility $2$ has agents in $L_i$ disapproving $p_2$ and agents in $R_i$ disapproving $p_1$.
    \end{itemize}
    Then, note that for any online algorithm, when selecting a project for timestep $i$, the algorithm has no information regarding which project corresponds to $L_i$ and $R_i$.
    
    We also note that by construction, $\cap_{i \in [k]} L_i \neq \varnothing$.
    Thus, the \minmax{} welfare of an outcome $\textbf{o}$ is at least the number of timesteps $i$ such that the project that agents in $L_i$ disapprove was chosen. 
    Let $\mathbf{I}_{L_i}$ be the indicator function that takes the value $1$ if and only if the project that agents in $L_i$ disapprove was chosen. 
    Then the  expected \minmax{} value is $\mathbb{E}[W(\textbf{o})] \geq \sum_{i = 0}^k \mathbb{E}[\mathbf{I}_{L_i}] = k/2$.

    However, if we select all the projects that agents in $R_i$ disapproves for all $i \in [k]$, the \minmax{} welfare is $1$. Thus,  the competitive ratio for any online algorithm with respect to the \minmax{} objective is at least $k/2 = \Omega(\log n)$.

\section{Omitted Proofs in Section \ref{sec:other}}
\subsection{Proof of Theorem \ref{thm:eq-np-complete}}
We reduce from the NP-complete problem \textsc{1-in-3-SAT} \cite{schaefer1978sat}. Given a Boolean formula, $F$, in conjunctive normal form with three literals per clause, the problem is to determine if there is an assignment $\mathcal{A}$ such that each clause has exactly one literal set to \texttt{TRUE}. Let $F$ contain $m$ clauses $\{ c_1 \dots c_m\}$. For a clause $c = (l_1 \lor l_2 \lor l_3)$, let negation of $c_i$ be $neg(c_i) = (\neg l_1 \lor \neg l_2 \lor \neg l_3)$.  Then let $F' = \bigwedge_{i \in [m]} neg(c_i)$ 

For our reduction, introduce an agent for each $c \in F$ and an agent for each $neg(c_i) \in F'$. Introduce a timestep for each variable $x$ present in $F$ containing two projects $p_1$ (which corresponds to assigning $x$ to \texttt{TRUE}) and $p_2$ (which corresponds to assigning $x$ to \texttt{FALSE}). Then, for each clause $c_i$ in $F \land F'$, the agent corresponding to $c_i$ disapproves $p_1$ if $\neg x \in c_i$, disapproves $p_2$ if $x \in c_i$ and disapproves neither projects otherwise.
Finally, we introduce a timestep with one project that all agents corresponding to clauses $neg(c_i) \in F'$ disapprove. Then, there is an equitable outcome if and only if there is an assignment such that each clause in $F$ has exactly one literal set to \texttt{TRUE}.

We note that there is an outcome $\mathbf{o}$ for our instance that corresponds naturally to an assignment $\mathcal{A}$ and vice-versa. Suppose the outcome $\mathbf{o}$ and the assignment $\mathcal{A}$ represents the same assignment. For all $i \in [m]$, if $\mathcal{A}$ sets $k$ literal in $c_i$ to \texttt{TRUE}, then the agent corresponding to $c_i$ has a disutility of $3 - k$ from $\mathbf{o}$ and
the agent corresponding to $neg(c_i)$ has a disutility of $k + 1$ from $\mathbf{o}$.

Suppose there is an assignment $\mathcal{A}$ such that each clause in $F$ has exactly one literal set to \texttt{TRUE}. Let $\mathbf{o}$ be the outcome corresponding to $\mathcal{A}$.  Then, for all agents corresponding to clause $c_i \in F$, the agent would have a disutility of $3 -1 = 2$ from the outcome $\mathbf{o}$. For all agents corresponding to clause $\neg(c_i) \in F$, the agent would have a disutility of $1 + 1 = 2$ from the outcome $\mathbf{o}$. Thus, $\mathbf{o}$ is equitable.

Suppose there is an equitable outcome $\mathbf{o}$. Let $\mathcal{A}$ be the assignment that correspond to the outcome. For clause $c_i$, suppose $\mathcal{A}$ sets $k$ literal in $c_i$ to \texttt{TRUE}. Then, as $\mathbf{o}$ is equitable, $3 - k = k+1$. Thus, as $k = 1$ for all $i \in [m]$, there is an assignment $\mathcal{A}$ such that each clause in $F$ has exactly one literal set to \texttt{TRUE}.

 \subsection{Proof of Theorem \ref{thm:price_of_EQ}}
 Consider the following instance with $n$ agents and $2n-3$ timestep. Timestep $1$ contains $1$ project that only agent $1$ disapproves. At all other timesteps, there are $2$ projects and $p_1$ is disapproved by no agents accross these timesteps. Then, clearly the optimal optimal outcome has \minsum{}/\minmax{} welfare of $1$.

     For timestep $i \in \{2, \dots n\}$, $p_2$ is disapproved by agents $\{i \mod n, \dots, (i + n -3)\mod n \}$. For timestep  $i \in \{n+1, \dots 2n-3\}$, $p_2$ is disapproved solely by agent 1. Selected $p_2$ accross all these timesteps yields an equitable outcome such that for all agents they have a disutility of $n-2$. Furthermore, this is the only equitable outcome. 

     If $p_2$ is not selected across timesteps $\{2, \dots n\}$, then agent 1 will have a greater disutility than all other agents and the outcome is not equitable. Thus, $p_2$ is selected at least once  in timesteps $\{2, \dots n\}$. Suppose $p_2$ was selected $k$ times in $\{2, \dots n\}$ such that $1 \leq k \leq n -1$. The total disutility by agents $\{2, \dots n\} = k(n-2)$ and the only way to outcome is equitable if $k(n-2)$ is divisible by $n-1$. Thus, $k = n-1$ and it follows that $p_2$ must also be selected at $\{n+1, \dots 2n-3\}$.

     Hence, all equitable outcomes have UTIL welfare of at least $(n-2)(n-1)$ and \minmax{} welfare of at least $(n-2)$. The price of equitability is at least $\Omega(n^2)$ with respect to UTIL and at least $\Omega(n)$ with respect to \minmax{}.
\end{document}